\newcommand{\tabincell}[2]{\begin{tabular}{@{}#1@{}}#2\end{tabular}}
\newtheorem{definition}{Definition}
\newtheorem{lemma}{Lemma}
\def\X{\mathscr{X}}
\begin{document}

%%% Start of article front matter
\begin{frontmatter}

\begin{fmbox}
\dochead{Research}

%%%%%%%%%%%%%%%%%%%%%%%%%%%%%%%%%%%%%%%%%%%%%%
%%                                          %%
%% Enter the title of your article here     %%
%%                                          %%
%%%%%%%%%%%%%%%%%%%%%%%%%%%%%%%%%%%%%%%%%%%%%%

\title{In the Light of Deep Coalescence: Revisiting Trees Within Networks}

%%%%%%%%%%%%%%%%%%%%%%%%%%%%%%%%%%%%%%%%%%%%%%
%%                                          %%
%% Enter the authors here                   %%
%%                                          %%
%% Specify information, if available,       %%
%% in the form:                             %%
%%   <key>={<id1>,<id2>}                    %%
%%   <key>=                                 %%
%% Comment or delete the keys which are     %%
%% not used. Repeat \author command as much %%
%% as required.                             %%
%%                                          %%
%%%%%%%%%%%%%%%%%%%%%%%%%%%%%%%%%%%%%%%%%%%%%%

\author[
   addressref={aff1},
   email={jiafan.zhu@rice.edu}
]{\inits{JZ}\fnm{Jiafan} \snm{Zhu}}
\author[
   addressref={aff1},                   % id's of addresses, e.g. {aff1,aff2}
   %corref={aff1},                       % id of corresponding address, if any
%   noteref={n1},                        % id's of article notes, if any
   %email={\{yy9,nakhleh\}@rice.edu}   % email address
]{\inits{YY}\fnm{Yun} \snm{Yu}}
\author[
   addressref={aff1,aff2},
   corref={aff1},  
   email={nakhleh@rice.edu}
]{\inits{LN}\fnm{Luay} \snm{Nakhleh}}

%%%%%%%%%%%%%%%%%%%%%%%%%%%%%%%%%%%%%%%%%%%%%%
%%                                          %%
%% Enter the authors' addresses here        %%
%%                                          %%
%% Repeat \address commands as much as      %%
%% required.                                %%
%%                                          %%
%%%%%%%%%%%%%%%%%%%%%%%%%%%%%%%%%%%%%%%%%%%%%%

\address[id=aff1]{%                           % unique id
  \orgname{Department of Computer Science, Rice University}, % university, etc
 % \street{6100 Main Street},                     %
  \city{Houston},                              % city
  \postcode{77005}                                % post or zip code
  \cny{Texas, USA}                                    % country
}

\address[id=aff2]{%                           % unique id
  \orgname{Department of BioSciences, Rice University}, % university, etc
 % \street{6100 Main Street},                     %
  \city{Houston},                              % city
  \postcode{77005}                                % post or zip code
  \cny{Texas, USA}                                    % country
}

%%%%%%%%%%%%%%%%%%%%%%%%%%%%%%%%%%%%%%%%%%%%%%
%%                                          %%
%% Enter short notes here                   %%
%%                                          %%
%% Short notes will be after addresses      %%
%% on first page.                           %%
%%                                          %%
%%%%%%%%%%%%%%%%%%%%%%%%%%%%%%%%%%%%%%%%%%%%%%

%\begin{artnotes}
%%\note{Sample of title note}     % note to the article
%\note[id=n1]{Equal contributor} % note, connected to author
%\end{artnotes}

\end{fmbox}% comment this for two column layout

%%%%%%%%%%%%%%%%%%%%%%%%%%%%%%%%%%%%%%%%%%%%%%
%%                                          %%
%% The Abstract begins here                 %%
%%                                          %%
%% Please refer to the Instructions for     %%
%% authors on http://www.biomedcentral.com  %%
%% and include the section headings         %%
%% accordingly for your article type.       %%
%%                                          %%
%%%%%%%%%%%%%%%%%%%%%%%%%%%%%%%%%%%%%%%%%%%%%%

\begin{abstractbox}

\begin{abstract} % abstract
\parttitle{Background} %if any
 Phylogenetic networks model reticulate evolutionary histories. The last two decades have seen an
 increased interest in establishing mathematical results and developing computational methods 
 for inferring and analyzing these networks.  A salient concept underlying a great majority of these 
 developments has been the notion that a network displays a set of trees and those trees can be 
 used to infer, analyze, and study the network. 
\parttitle{Results} %if any
In this paper, we show that in the presence of coalescence effects, the set of displayed trees is not 
sufficient to capture the network. We formally define the set of parental trees of a network and make 
three contributions based on this definition. First, we extend the notion of anomaly zone to phylogenetic 
networks and report on anomaly results for different networks. 
 Second, we demonstrate how coalescence 
events could negatively affect the ability to infer a species tree that could be augmented into the 
correct network. Third, we demonstrate how a phylogenetic network can be viewed as a mixture model 
that lends itself to a novel inference approach via gene tree clustering.  
\parttitle{Conclusions} %if any
Our results demonstrate the limitations of focusing on the set of trees displayed by a network when 
analyzing and inferring the network. Our findings can form the basis for achieving higher accuracy 
when inferring phylogenetic networks and open up new venues for research in this area, including new 
problem formulations based on the notion of a network's parental trees. 

\end{abstract}

%%%%%%%%%%%%%%%%%%%%%%%%%%%%%%%%%%%%%%%%%%%%%%
%%                                          %%
%% The keywords begin here                  %%
%%                                          %%
%% Put each keyword in separate \kwd{}.     %%
%%                                          %%
%%%%%%%%%%%%%%%%%%%%%%%%%%%%%%%%%%%%%%%%%%%%%%

%\begin{keyword}
%\kwd{sample}
%\kwd{article}
%\kwd{author}
%\end{keyword}

% MSC classifications codes, if any
%\begin{keyword}[class=AMS]
%\kwd[Primary ]{}
%\kwd{}
%\kwd[; secondary ]{}
%\end{keyword}

\end{abstractbox}
%
%\end{fmbox}% uncomment this for twcolumn layout

\end{frontmatter}

%%%%%%%%%%%%%%%%%%%%%%%%%%%%%%%%%%%%%%%%%%%%%%
%%                                          %%
%% The Main Body begins here                %%
%%                                          %%
%% Please refer to the instructions for     %%
%% authors on:                              %%
%% http://www.biomedcentral.com/info/authors%%
%% and include the section headings         %%
%% accordingly for your article type.       %%
%%                                          %%
%% See the Results and Discussion section   %%
%% for details on how to create sub-sections%%
%%                                          %%
%% use \cite{...} to cite references        %%
%%  \cite{koon} and                         %%
%%  \cite{oreg,khar,zvai,xjon,schn,pond}    %%
%%  \nocite{smith,marg,hunn,advi,koha,mouse}%%
%%                                          %%
%%%%%%%%%%%%%%%%%%%%%%%%%%%%%%%%%%%%%%%%%%%%%%

%%%%%%%%%%%%%%%%%%%%%%%%% start of article main body
% <put your article body there>

%%%%%%%%%%%%%%%%
%% Background %%
%%

\vspace{-.3in}
\section*{Background}
 Evolutionary, or explicit, phylogenetic networks are graphical models that model reticulate evolutionary
 histories \cite{HusonBryant2006,nakhleh2010evolutionary,bapteste2013networks}. Such evolutionary histories arise when processes such as horizontal gene 
 transfer or hybridization occur. Research into mathematical properties, complexity results, and algorithmic 
 techniques has exploded recently, as evident by the publication of three recent books on the 
 subject \cite{HusonBook,morrison2011introduction,gusfield2014recombinatorics}. A main premise behind the 
 use of phylogenetic networks is that when a single tree is not sufficient to model the evolutionary history of a 
 set of sequences or characters, a phylogenetic network that encompasses several trees is used. For example,
 the phylogenetic network in Fig. \ref{fig:net}(a) depicts an evolutionary history that involves hybridization between 
 taxon D and the most recent common ancestor (MRCA) of taxa B and C. 
\begin{figure}[!ht]
\begin{center}
\includegraphics[width=2.5in]{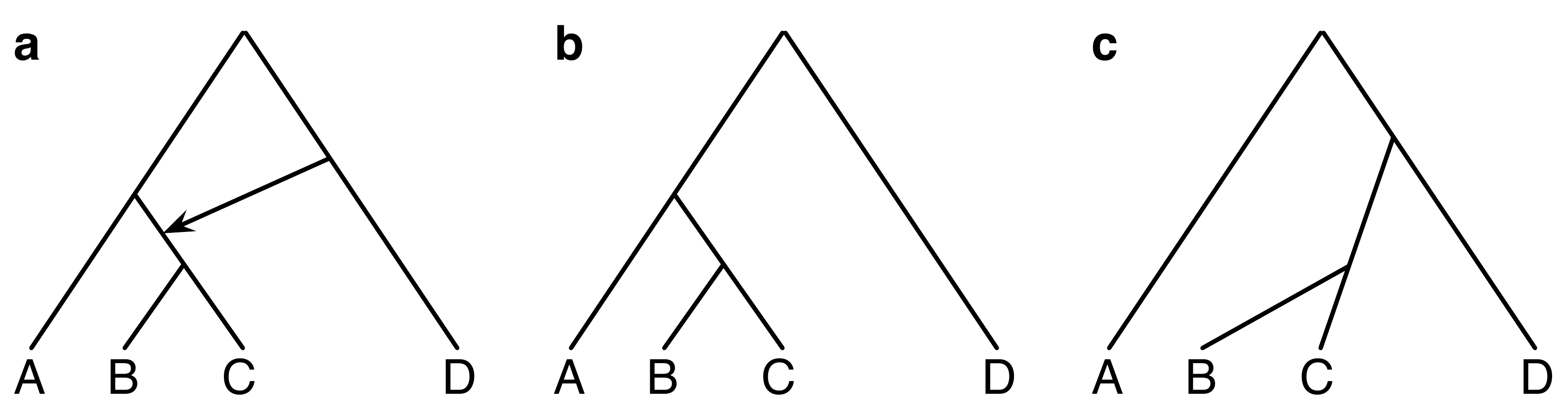} 
\vspace{-.15in}
\caption{A phylogenetic network $\psi$ with one reticulation node (a) and two trees it displays (b,c).\label{fig:net}}
\end{center}
\end{figure}

 Central to research on phylogenetic networks has been the notion of trees {\em displayed} by a phylogenetic network.
We say that a phylogenetic network displays a tree if the tree can be obtain be removing a set of ``reticulation edges" 
of the network. Fig. \ref{fig:net} shows the two trees displayed by the  network given in the figure. Given a phylogenetic 
network $\psi$, we denote by ${\cal U}(\psi)$ the set of all trees displayed by $\psi$. When incongruence in the gene 
trees inferred on different genomic regions across a genome alignment is assumed to be caused only by reticulation (e.g.,
hybridization), then the observed gene trees are taken to be a subset of the set of trees displayed by the (unknown) 
phylogenetic network for the set of genomes. This is why the set ${\cal U}(\psi)$ has played a fundamental role in most 
results established for phylogenetic networks.  Examples of the prominent use of ${\cal U}(\psi)$ include: (1) Parsimonious phylogenetic networks that fit the 
evolution of a sequence of sequences under the infinite sites model \cite{wang2001perfect,nakhleh2005perfect,gusfield2007decomposition,gusfield2003efficient,song2006algorithms,song2003parsimonious,song2004minimum,song2005constructing}; 
 (2) extending the maximum parsimony and maximum likelihood criteria from trees to networks \cite{hein90,NakhlehJin05,JinNakhleh-eccb,JinNakhleh-bioinfo,JinNakhleh-isbra,JinNakhleh-mbe}; (3) inferring minimal networks from sets of gene trees \cite{BaroniSemple06,HusonRupp08,van2010phylogenetic,wu2013algorithm}; (4) establishing identifiability results related to networks \cite{pardi2015reconstructible}; (5) establishing complexity results related to networks \cite{kanj2006reconstructing,bordewich07b,kanj2008seeing,kanj2008compatibility,van2010locating,van2011two}; and (6) identifying special 
 trees within the network \cite{steel2013identifying,daskalakis2015species,davidson2015phylogenomic,francis2015phylogenetic}.
	
 One of the evolutionary phenomena that has been extensively documented in recent analyses and targeted for 
 computational developments is {\em deep coalescence}, or {\em incomplete lineage sorting} \cite{rosenberg2002probability}. 	
 This phenomenon amounts to gene tree incongruence due to population effects (e.g., the size of an ancestral population or the 
 time between two divergence events). When this phenomenon is present in a reticulate evolutionary history, a major challenge 
 faces all the aforementioned works: The set of trees displayed by a network is no longer adequate to fully capture gene evolution 
 within the network. To resolve this issue, we define the set of parental trees of a phylogenetic network
 to supplant the set of displayed trees. Based on this set, we make three contributions. First, we extend the concept of anomaly zone 
 to phylogenetic networks and establish results based on this concept. It is important to note here that Sol{\'\i}s-Lemus {\em et al.} 
  \cite{solis2016inconsistency} recently discussed the issue of anomaly in the presence of reticulation where they focused on the ``species tree" inside the 
  network. Here, we define the anomaly zone in terms of the whole set of parental trees and do not designate a species tree inside the 
  network. Second, we address the problem of inferring a backbone tree 
 inside the network that could serve as a starting tree for network searches and/or provide information on a potential species tree 
 despite reticulation. As in the first contribution, the work here differs from that of  \cite{solis2016inconsistency} in focusing on all trees displayed by a network,
 rather than just a designate species tree. Third, we propose a novel clustering-based approach to phylogenetic network inference from gene trees by which 
 the gene trees are first clustered, parental trees are inferred from the clusters, and then the parental trees are combined into a 
 phylogenetic network. Gori {\em et al.} \cite{Gori01062016} recently studied the performance of various combinations of dissimilarity measures 
 and clustering techniques in clustering gene trees. Our work differs from that of \cite{Gori01062016} in that our focus is on phylogenetic network inference 
 via clustering.  We believe our work will open up new venues for research into computational methods and mathematical results 
 for reticulate evolutionary histories. 
\vspace{-.15in}
\section*{Methods}
We focus here on binary evolutionary (or, explicit) phylogenetic networks \cite{nakhleh2010evolutionary}.
\vspace{-.15in}
\begin{definition}
The topology of a phylogenetic network $\psi$ is a rooted directed acyclic graph $(V,E)$ such that $V$ contains 
a unique node with in-degree $0$ and out-degree $2$ (the root) and each of the other nodes has either in-degree $1$ and out-degree $2$ (an internal tree node),
an in-degree $1$ and out-degree $0$ (an external tree node, or leaf), or in-degree $2$ and out-degree $1$ (a reticulation node). The phylogenetic network 
has branch lengths $\lambda$, such that $\lambda_b$ denotes the length, in coalescent units (in coalescent units, the length of an edge in number of generations is divided by twice the size of the effective population size of 
 the population associated with that edge, and is a standard unit used in coalescent theory), of branch $b$ in $\psi$. 
\end{definition}
\vspace{-.15in}
 As we discussed in the Background section and illustrated in Fig. \ref{fig:net}, the notion of trees displayed by a network 
 has played a central role in analyzing and inferring networks. 
\vspace{-.15in}
 \begin{definition}
 Let $\psi$ be a phylogenetic network. A tree $t$ is displayed by $\psi$ if it can be obtained by removing for each reticulation node 
 one of the edges incident into it followed by repeatedly applying forced contractions until no nodes of in- and out-degree $1$ remain. 
 We denote by ${\cal U}(\psi)$ the set of all trees displayed by $\psi$. 
 \end{definition}
 \vspace{-.15in}
 Fig. \ref{fig:net} shows a phylogenetic network $\psi$ along with ${\cal U}(\psi)$. 
\vspace{-.15in}
\subsection*{Deep coalescence and the parental trees inside a network}
 Let us consider tracing the evolution of a recombination-free genomic region
 of four individuals $a$, $b$, $c$, and $d$, sampled from the four taxa A, B, C and D within the branches of the phylogenetic network 
 $\psi$ of Fig. \ref{fig:net}. If $b$ and $c$ coalesce at the most recent common ancestor (MRCA) of B and C, and no events such as 
 deep coalescence or duplication/loss occur anywhere in the phylogenetic network, then the genealogy of 
 the genomic region is one of the two trees in the set ${\cal U}(\psi)$. This is precisely the reason why much attention has been given 
 to the set ${\cal U}(\psi)$, as discussed in the Background section. 

 However, let us now consider a scenario where $b$ and $c$ did not coalesce at the MRCA of B and C. One potential outcome 
 in terms of the resulting genealogy for $a$, $b$, $c$, and $d$ is illustrated in Fig. \ref{fig:dc}(a). The probability that $b$ and $c$ 
 fail to coalesce at the MRCA of B and C has to do with the quantity $y$ in the figure: The smaller it is, the more likely it is that $b$
 and $c$ would fail to coalesce \cite{DegnanSalter05}. 
\begin{figure}[!ht]
\begin{center}
\includegraphics[width=2.2in]{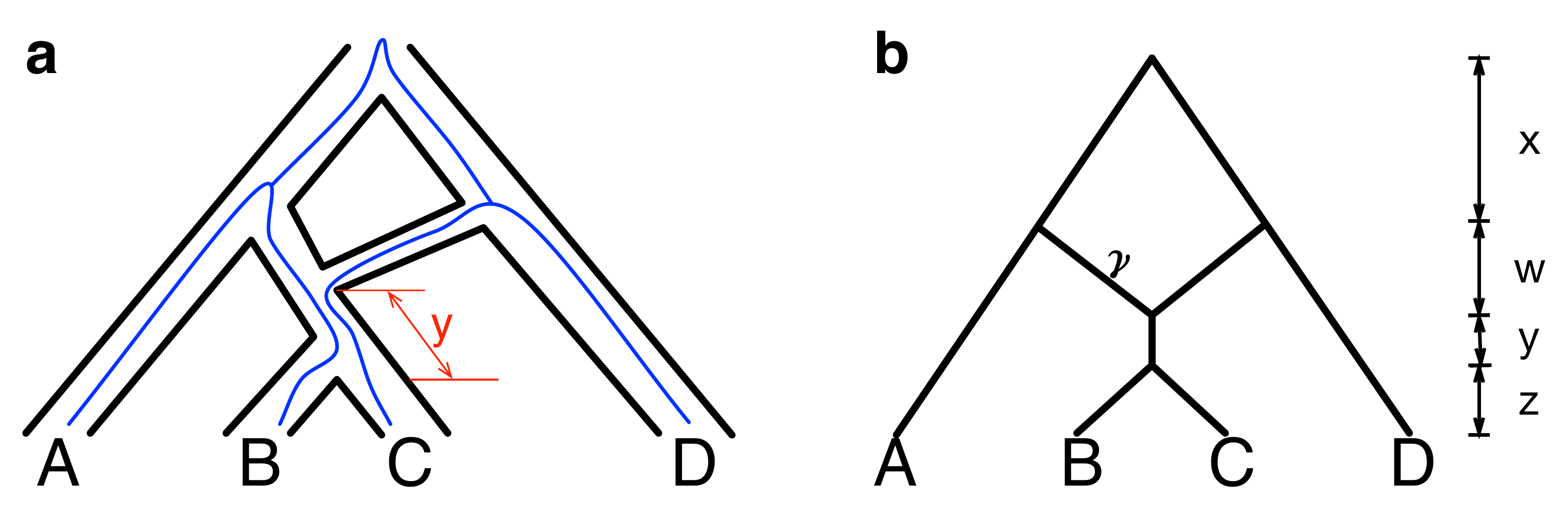}
\vspace{-.15in}
\caption{(a) The genealogy of a gene (blue lines) within the branches of a phylogenetic network. In this scenario, 
the two lineages from B and C failed to coalesce prior to the reticulation node (evolution is viewed backward in time,
from the leaves toward the root). The resulting genealogy in this case is ((a,b),(c,d)) and neither of the two trees 
in the set ${\cal U}(\psi)$ (shown in Fig. \ref{fig:net}) capture this scenario. The length in coalescent units of the branch between the reticulation
node and the MRCA of B and C is $y$. (b) An abstract representation of the network, assuming both reticulation edges 
have the same length $w$. \label{fig:dc}}
\end{center}
\end{figure}
Interestingly, for the scenario illustrated in Fig. \ref{fig:dc}(a), neither of the two trees in the set ${\cal U}(\psi)$ can capture 
the shown genealogy. This brings us to define the set of parental trees inside a phylogenetic network to appropriately 
represent the network as a mixture of trees that adequately model the evolution of genes in the presence of deep coalescence.

Yu {\em et al.} \cite{yu2012probability} gave an algorithm for the simple task of converting a phylogenetic network $\psi$ to a multi-labeled tree, or MUL-tree, $T$.
 Proceeding from the leaves of the network toward the root, the algorithm creates two copies of each subtree rooted at a reticulation node, attaches them to 
 the two parents of the reticulation node, and deletes the two reticulation edges. See Fig. \ref{fig:net2}(a) for an illustration. Notice that multiple leaves could be 
 labeled with the same taxon name, and hence the MUL-tree naming.  Due to page limitations, we provide the pseudo-code of the algorithm in the Appendix.  
 
 As phylogenomic analyses are increasingly involving multiple individuals per species, we provide a general definition that applies to cases 
 with multiple individuals per species. Let ${\cal X}$ be the set of species and $a_x$ denote the number of genomes sampled from species  $x \in {\cal X}$. 
 Let $T$ be a MUL-tree. We denote by $T|_{({\cal X},a)}$ a tree obtained from $T$ by retaining, for each taxon $x \in {\cal X}$, $a(x)$ or fewer leaves labeled by 
 $x$ and deleting the remaining leaves labeled by $x$, followed by repeatedly applying forced contractions until no nodes of in- and out-degree $1$ remain. 
  \vspace{-.15in}
\begin{definition}
 Let $\psi$ be a phylogenetic network on set ${\cal X}$ of taxa and $T$ be its MUL-tree. A parental tree inside $\psi$ is a tree $t$ such that $t=T|_{({\cal X},a)}$. 
 We denote by ${\cal W}(\psi)$ the set of all parental trees inside $\psi$. 
 \end{definition}
 \vspace{-.15in}
 Fig. \ref{fig:net2} shows the set ${\cal W}(\psi)$ for the phylogenetic network in Fig. \ref{fig:net}. 
\begin{figure}[!ht]
\begin{center}
\includegraphics[width=4.5in]{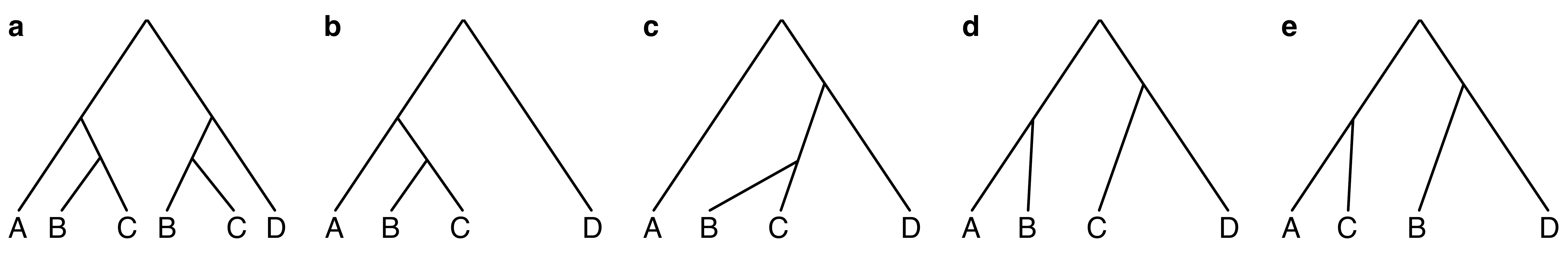}
\vspace{-.15in}
\caption{(a) The MUL-tree of the phylogenetic network $\psi$ in Fig. \ref{fig:net}(a). The set ${\cal W}(\psi)$ consists of the four trees in (b)--(e), assuming 
 one individual is sampled per species. \label{fig:net2}}
\end{center}
\end{figure}
 The genealogy shown in Fig. \ref{fig:dc}(a) can be captured by the parental tree in Fig. \ref{fig:net2}(d). Indeed, Yu {\em et al.} 
 \cite{yu2012probability,yu2014maximum} gave mass and density functions for gene trees on phylogenetic networks in terms of the 
 set of parental trees inside the network. While it is obvious that ${\cal U}(\psi) \subseteq {\cal W}(\psi)$, the two sets can differ significantly 
 in terms of their properties. For example, if $\psi$ has $k$ reticulation nodes, then $|{\cal U}(\psi)| \leq 2^k$. However, $|{\cal W}(\psi)|$ could 
 be much larger than $2^k$, as it is a function of the numbers of leaves under the reticulation nodes as well as the numbers of individuals sampled 
 per species. 
\vspace{-.15in}
\subsection*{Inheritance probabilities and the multispecies network coalescent}
 Given a species tree topology $\psi$ and its branch lengths $\lambda$, the gene tree topology $G$ can be viewed as 
 a discrete random variable whose mass function  $P_{\psi,\lambda}(G=g)$ was derived in \cite{DegnanSalter05}. In the case of phylogenetic
 networks, we also associate with every pair of edges $b_1=(u_1,v)$ and $b_2=(u_2,v)$  that are incident into the same reticulation node $v$ nonnegative  
 real values $\gamma_{b_1}$ and $\gamma_{b_2}$ such that $\gamma_{b_1}+\gamma_{b_2}=1$ \cite{yu2012probability,yu2014maximum}. These quantities, which we call inheritance probabilities, 
  indicate the proportions of lineages in hybrid populations that trace each of the two parents of that population. In this case, the phylogenetic network's 
  topology $\psi$ and branch lengths $\lambda$, along with the inheritance probabilities $\Gamma$, are sufficient to 
  describe the mass function of gene trees $P_{\psi,\lambda,\Gamma}(G=g)$ under the multispecies network coalescent \cite{yu2012probability,yu2014maximum}. 
  \vspace{-.15in}
\section*{Results and discussion}
In this section we describe the three main contributions of this work. First, we extend the concept of anomaly zones \cite{degnan2006discordance} to phylogenetic networks 
and establish conditions for their existence. Second, we address the question of whether it is possible, from an inference perspective, to obtain a tree 
that can be augmented into the correct network by adding reticulation edges between pairs of the tree's edges. Third, we propose a clustering approach 
to network inference by clustering the gene trees, inferring parental trees, and then combining the parental trees into a network. These results have 
direct implications not only on understanding the relationships between trees and networks, but also the practical task of developing computational methods 
for network inference. 
\vspace{-.15in}
\subsection*{Phylogenetic networks and anomalies}
 In a seminal paper, Degnan and Rosenberg \cite{degnan2006discordance} showed that the branch lengths of a species tree could be set 
 such that the most likely gene tree disagrees with the species tree. Such a gene tree is called an {\em anomalous gene 
 tree} and the set of all branch length settings that result in an anomalous gene tree is the {\em anomaly zone}.  
  
 We now provide what, to the best of our knowledge, is the first definition of anomaly zones for phylogenetic networks. Note that in \cite{solis2016inconsistency},
 Sol{\'\i}s-Lemus {\em et al.} discussed anomalous gene trees in the presence of ILS and gene flow. However, in their work, the anomaly was still defined with 
 respect to a designated species tree (they viewed the phylogenetic network as a species tree with additional horizontal edge between pairs of its branches). Here, 
 we do not designate any of the parental trees of the network as a species tree; instead, we define the anomaly zone directly in terms of the entire set. 
 
  The guiding principle 
 behind our definition is the question: Is the most likely gene tree to be generated by a phylogenetic network necessarily a parental tree inside 
 the network? 
 \vspace{-.15in}
 \begin{definition}
 Let $\psi$ be a phylogenetic network, $\lambda$ be its branch lengths, and $\Gamma$ be the inheritance probabilities associated 
 with its reticulation edges. 
We say gene tree topology $g$ is  anomalous for $(\psi,\lambda,\Gamma)$ if
\begin{equation}
\label{anom}
P_{\psi,\lambda,\Gamma}(G=g)>P_{\psi,\lambda,\Gamma}(G=t) \ \ \ \forall{t \in {\cal W}}(\psi).
\end{equation}
A phylogenetic network $\psi$ is said to produce anomalies if there exists branch lengths $\lambda$ and inheritance probabilities $\Gamma$
such that there exists an anomalous gene tree $g$ for $(\psi,\lambda,\Gamma)$.
The anomaly zone for a phylogenetic network $\psi$ is a set of  $(\Lambda,\Gamma)$ values for which $\psi$ produces anomalies.
\end{definition}
\vspace{-.15in}
Degnan and Rosenberg \cite{degnan2006discordance} showed 
 that three-taxon and symmetric four-taxon species trees have no anomaly zones, but that non-symmetric four-taxon trees and all species trees 
 with five or more taxa have anomaly zones. One practical implication of these results was that the simple approach of sampling a very large 
 number of loci, building gene trees and taking the most frequent gene tree as the species tree (an approach dubbed ``the democratic vote" method) 
 does not always work. 
 
 Since the multispecies coalescent is a special case of the multispecies network coalescent, it immediately follows that 
 any phylogenetic network with $n \geq 5$ leaves produces anomalies. We now show that three-taxon phylogenetic 
 networks do not produce anomalies, but that symmetric phylogenetic networks with $n=4$ leaves could produce anomalies. 
 Note that according to \cite{solis2016inconsistency}, 3-taxon networks could still generate anomalous gene trees. The seeming 
 discrepancy between the two results is due to to the fact that here we define the anomaly zone in terms of all the parental trees inside 
 the network and not just one designate species tree. 
 
\vspace{-.15in}
\begin{lemma} 
\label{lemm1}
A phylogenetic network $\psi$ on 3 taxa does not produce anomalies.
\end{lemma}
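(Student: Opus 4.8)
The plan is to reduce the statement to the classical three-taxon coalescent computation by exploiting the fact that, on three taxa with one lineage sampled per species, there are only three possible rooted gene tree topologies, say $T_1=((A,B),C)$, $T_2=((A,C),B)$ and $T_3=((B,C),A)$, and that $\mathcal{W}(\psi)$ is necessarily a subset of $\{T_1,T_2,T_3\}$. If $\mathcal{W}(\psi)=\{T_1,T_2,T_3\}$ then every gene tree topology is a parental tree and no anomalous topology can exist by definition (an anomalous $g$ must satisfy $P(g)>P(g)$ when $g\in\mathcal{W}(\psi)$, a contradiction), so the only case requiring work is $|\mathcal{W}(\psi)|\le 2$, where at least one topology $g^\ast\notin\mathcal{W}(\psi)$ exists; the goal becomes showing that no such $g^\ast$ can strictly dominate every parental tree.

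The central step is to write the network mass function as a convex combination of ordinary multispecies-coalescent mass functions on the parental trees. Invoking the density/mass formulas of Yu et al. \cite{yu2012probability,yu2014maximum}, I would express $P_{\psi,\lambda,\Gamma}(G=g)=\sum_i \gamma_i\, P_{t_i}(G=g)$, where each $t_i$ is a parental tree carrying the branch lengths it inherits from $\psi$, its topology lies in $\mathcal{W}(\psi)$, the weights $\gamma_i\ge 0$ are products of inheritance probabilities along the routing that produces $t_i$, and $\sum_i\gamma_i=1$. Concretely, conditioning on the parent edge chosen by each lineage at each reticulation turns the network coalescent into a standard coalescent on the resulting parental tree, so the mixture weights are exactly the routing probabilities.

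With this decomposition in hand I would plug in the explicit three-taxon coalescent probabilities \cite{DegnanSalter05}: for a parental tree $t_i$ with internal branch length $x_i$, the matching topology receives probability $1-\tfrac23 e^{-x_i}$ while each of the two non-matching topologies receives $\tfrac13 e^{-x_i}$. Fixing any parental topology $\tau\in\mathcal{W}(\psi)$ and a missing topology $g^\ast$, I would compute $P_{\psi,\lambda,\Gamma}(G=\tau)-P_{\psi,\lambda,\Gamma}(G=g^\ast)=\sum_i\gamma_i\bigl(P_{t_i}(\tau)-P_{t_i}(g^\ast)\bigr)$ and observe that each summand is nonnegative: for a component whose topology equals $\tau$ the difference is $(1-\tfrac23 e^{-x_i})-\tfrac13 e^{-x_i}=1-e^{-x_i}\ge 0$, while for any component whose topology differs from $\tau$ both $\tau$ and $g^\ast$ are non-matching and the difference is $\tfrac13 e^{-x_i}-\tfrac13 e^{-x_i}=0$ (here I use that $g^\ast$, being non-parental, is never the matching topology of any component). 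Hence $P_{\psi,\lambda,\Gamma}(G=\tau)\ge P_{\psi,\lambda,\Gamma}(G=g^\ast)$ for every $\tau\in\mathcal{W}(\psi)$, so $g^\ast$ fails the strict-domination condition of Eq. \eqref{anom}; since an anomalous gene tree must be non-parental, no anomalous gene tree exists for any $(\lambda,\Gamma)$.

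I expect the main obstacle to be the second step rather than the final inequality. The whole point of introducing parental trees is that deep coalescence creates lineage routings not captured by a single displayed tree, so I must argue carefully that the network mass function really does decompose as a convex mixture of honest three-taxon coalescent distributions whose matching topologies all lie in $\mathcal{W}(\psi)$ — in particular that no mixture component has its matching topology equal to a missing topology $g^\ast$, and that conditioning on routing choices (even when coalescences occur below a reticulation) never produces effective trees outside $\mathcal{W}(\psi)$. Once this mixture representation is pinned down, the remaining arithmetic is the elementary three-taxon estimate together with the monotonicity $1-e^{-x}\ge 0$, and the degenerate cases ($x_i=0$) only render the inequalities non-strict, which still defeats the strict inequality required for an anomaly.
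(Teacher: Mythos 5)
Your case $|{\cal W}(\psi)|=3$ and your closing inequality match the logic of the paper's proof, and that inequality would indeed finish the job \emph{if} your mixture representation were available. But the representation you lean on --- $P_{\psi,\lambda,\Gamma}(G=g)=\sum_i \gamma_i\, P_{t_i}(G=g)$ with each $t_i$ a parental tree and each weight a fixed product of inheritance probabilities --- is false in general, and this is a genuine gap, not a technicality you can discharge by citation. Inheritance probabilities are applied to the lineages that actually arrive at a reticulation node; lineages that coalesce \emph{below} a reticulation node make a single joint routing choice, not independent ones. The paper's own Table \ref{tab:table1} for the network of Fig. \ref{fig:dc}(b) exhibits the failure: the terms multiplied by $g_{21}(y)$ (the event that $b$ and $c$ coalesce before reaching the reticulation node) carry a single factor $\gamma$ or $1-\gamma$, whereas your mixture would force the weights $\gamma^2$, $(1-\gamma)^2$, $2\gamma(1-\gamma)$ onto every term; likewise the split-routing contributions in the table carry a factor $g_{22}(y)$ that your components $((A,B),(C,D))$ and $((A,C),(B,D))$ cannot produce. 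Nor can cleverer weights repair this: taking $x\to\infty$ and $\gamma=\tfrac12$, matching the probability of $(((a,b),c),d)$ forces the weight on $((A,(B,C)),D)$ to be $\tfrac14$, which then assigns mixture mass $\tfrac14-\tfrac16 e^{-y}$ to $(((b,c),a),d)$ instead of the true $\tfrac12-\tfrac{5}{12}e^{-y}$, so for $y>0$ the network distribution does not even lie in the convex hull of the parental-tree coalescent distributions (with their inherited branch lengths). Appealing to \cite{yu2012probability,yu2014maximum} does not save the step: their formulas sum over coalescent histories, with $\gamma$-exponents depending on how many lineages cross each reticulation edge in each history --- exactly the dependence a fixed mixture cannot encode.

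The gap matters even in the 3-taxon setting you need. Your decomposition does hold when every reticulation node has a single taxon below it (each routing choice is then made by a lone, necessarily uncoalesced lineage), but there exist binary 3-taxon networks in which two taxa lie below a common reticulation node, and there the identity fails. (In the examples one can construct, ${\cal W}(\psi)$ turns out to contain all three topologies, so the lemma is vacuous in those cases --- but your proof neither makes this case split nor proves that claim.) The paper's proof sidesteps the decomposition entirely: it argues that any realization producing a topology outside ${\cal W}(\psi)$ requires the decisive coalescence to occur where all three lineages are present and exchangeable (at or above the root), so that topology's probability is at most the corresponding $\tfrac13$ share, while every parental topology receives at least that share plus whatever mass comes from coalescences occurring lower down; hence the strict inequality of Eq. \eqref{anom} can never hold. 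To repair your argument you would need either to restrict the mixture claim to networks whose reticulation nodes each have one descendant taxon and handle the remaining 3-taxon networks separately, or to replace the decomposition with a conditioning argument of the paper's type.
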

\vspace{-.15in}
(Proof is in the Appendix.) Consider now the symmetric phylogenetic network $\psi$ in Fig. \ref{fig:dc}(b) and whose set of parental trees in 
given in Fig. \ref{fig:net2}. 
%\begin{figure}[!ht]
%  \centering
%  \includegraphics[width=0.2\textwidth]{figures/4TaxaSpecial0}
%  \vspace{-.15in}
%  \caption{A symmetric phylogenetic network with 4 taxa and 1 reticulation node. The two horizontal edges have length $0$ each. 
%  \label{fig:4taxa}}
%\end{figure}
 The four gene trees that are identical to the parental trees of the network are  $((a,(b,c)),d)$, $(a,((b,c),d))$, $((a,b),(c,d))$ and $((a,c),(b,d))$. 
 We plotted the most probable gene tree of this network when $x$ and $y$ are both small in Fig. \ref{fig:4TaxaSpecialProb}, in which yellow region and orange region stand for the anomaly zone of this network, and blue region is where the most 
  probable gene tree be the backbone tree. This figure shows the existence of anomaly zone of the network in Fig. \ref{fig:dc}(b) (where $w$ is set to $0$), which means that symmetric phylogenetic networks with $n=4$ leaves could produce anomalies.

 \begin{figure}[!ht]
  \centering
  %  \subfigure[$\gamma=0.5$]{
  %	  \label{fig:4TaxaSpecialProb:a}
      \includegraphics[width=0.4\textwidth]{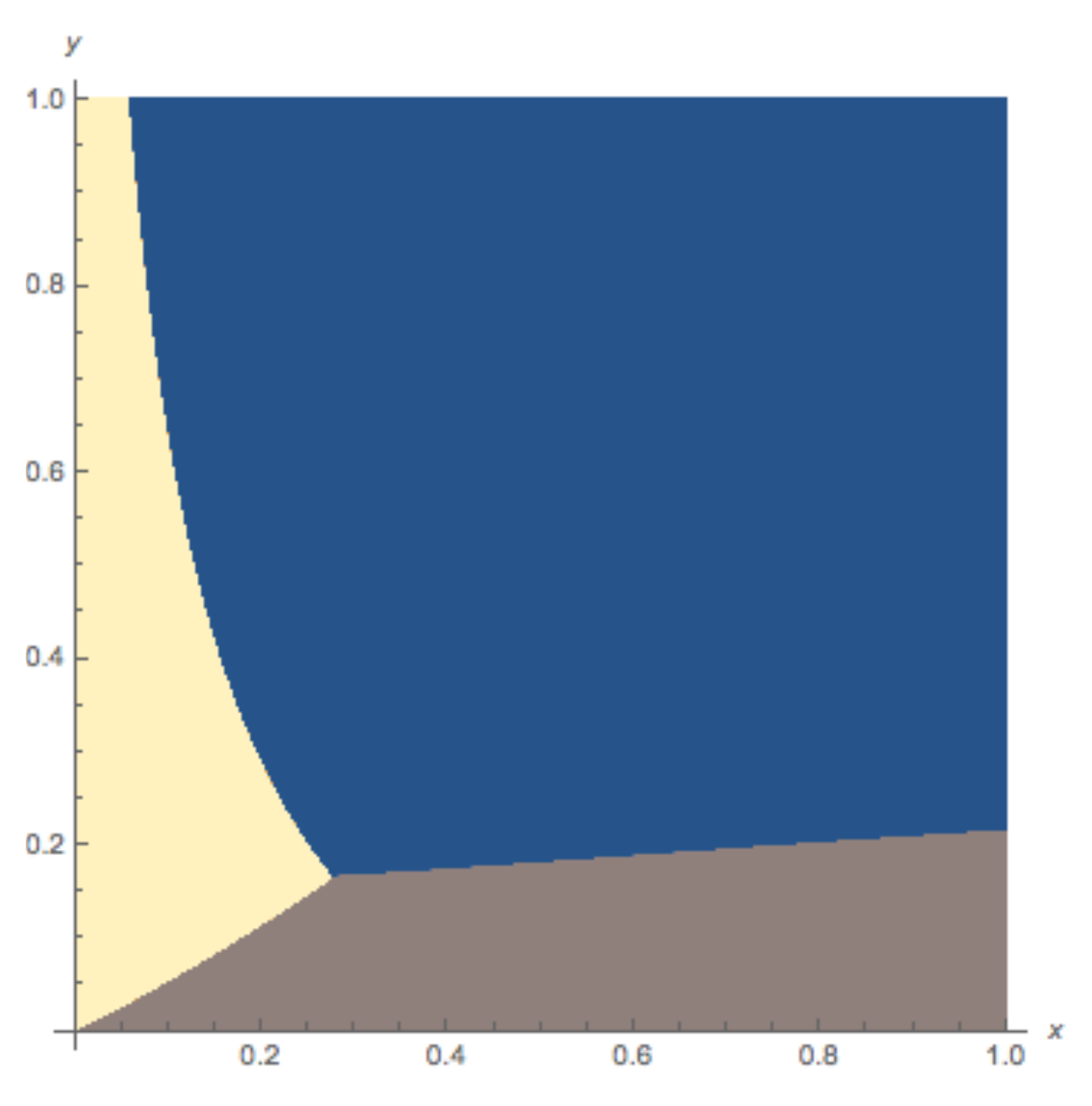}
%    \subfigure[$\gamma=0.05$]{
  %	  \label{fig:4TaxaSpecialProb:b}
      \includegraphics[width=0.4\textwidth]{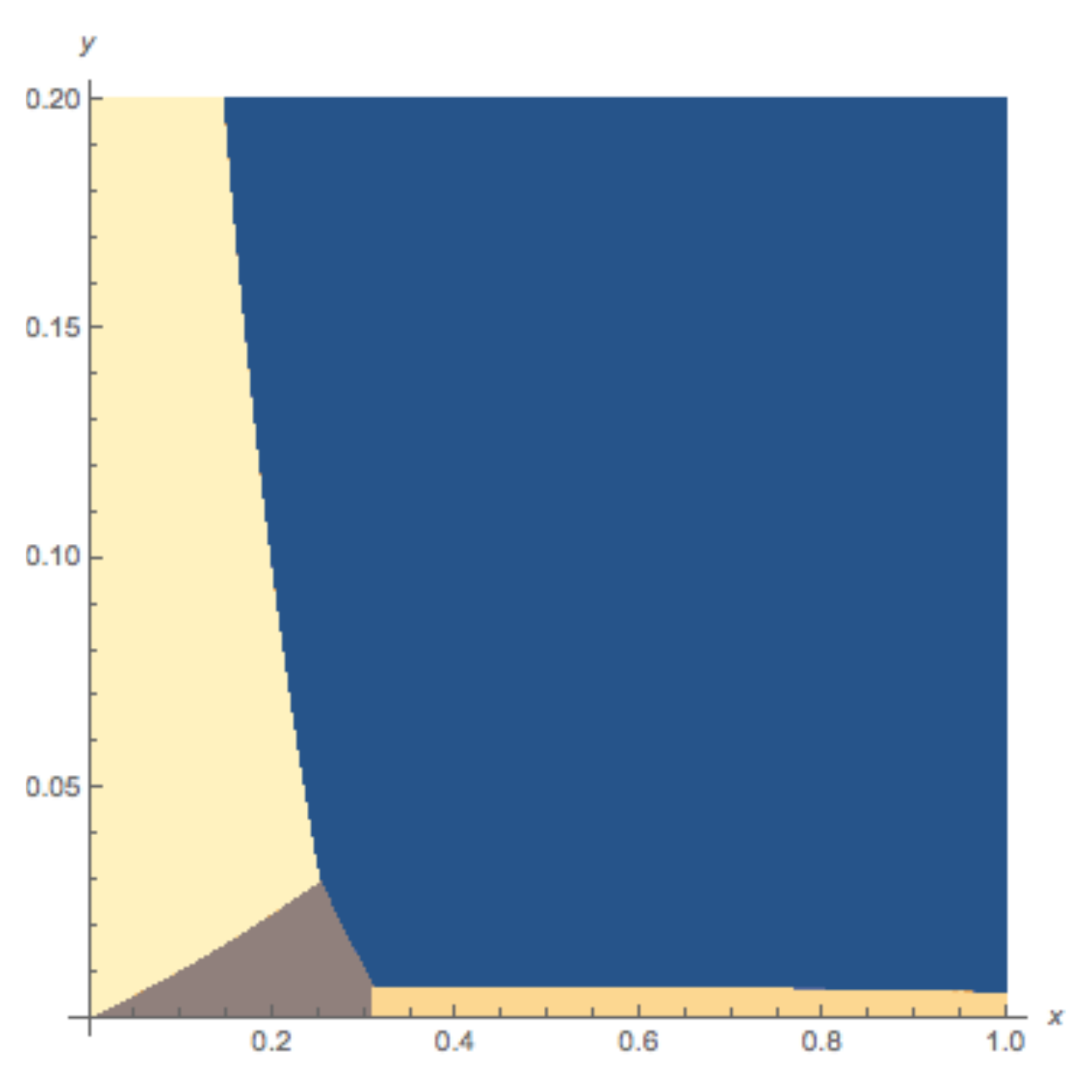}
      \vspace{-.15in}
  \caption{The most likely gene tree give the phylogenetic network in Figure \ref{fig:dc}(b) (with $w=0$) with $\gamma=0.5$ (left) and $\gamma=0.05$ (right). The y-axis is branch length $y$. The x-axis is branch length $x$. Blue: gene tree $(a,((b,c),d))$ in both panels, and gene tree $((a,(b,c)),d)$ additionally in the left panel; Yellow: gene tree $((a,d),(b,c))$;
  Orange: gene trees $(a,(b,(c,d)))$ and $(a,(c,(b,d)))$; Brown: gene trees $((a,b),(c,d))$ and $((a,c),(b,d))$. 
  \label{fig:4TaxaSpecialProb}}
\end{figure}

 \subsection*{On the backbone tree of a phylogenetic network}
 A very important question in the area of phylogenetic network inference is whether there exists a tree that 
 can be augmented into the network by adding reticulation edges between pairs of the tree's edges. Here, we refer 
 to such a tree as the network's {\em backbone tree}. A biological significance of this tree lies in its potential designation
 as the species tree (e.g., see the species tree underlying the phylogenetic network of mosquitos in \cite{FontaineEtAl15}). 
 
 Francis and Steel \cite{francis2015phylogenetic} recently introduced the notion of {\em tree-based networks} to capture 
 those networks that can be obtained by augmenting a backbone tree (they called it the ``base tree"). We now show that even
 if a network is tree-based, it is not necessarily the case that the most likely gene tree is its base, a result that is related to the 
 anomalous gene trees discussed above. Let us consider again the network of Fig. \ref{fig:dc}(b). This network is tree-based 
 and each of the two trees in Fig. \ref{fig:net} could serve as its base (indeed, the same network is drawn in Fig. \ref{fig:net} in 
 a way that clearly demonstrates that it is tree-based). The probabilities of all 15 gene trees under this phylogenetic network are 
 given in Table \ref{tab:table1} in the Appendix. 
 While there are 15 possible gene tree topologies on taxa $a$, $b$, $c$, and $d$, as branch length $x$ in the network tends to 
 infinity, the probability of seven of the 15 gene tree topologies converges to $0$ and only eight gene trees have non-zero mass: 
 $((a,(b,c)),d)$, $(a,((b,c),d))$, $((a,b),(c,d))$, $((a,c),(b,d))$, $(((a,b),c),d)$, $(((a,c),b),d)$, $(a,(b,(c,d)))$, and $(a,(c,(b,d)))$. The probabilities in this 
 case are given in Table \ref{tab:table1x} in the Appendix and visualized as a function of varying branch length $y$ for two different 
 settings of $\gamma$ in Fig. \ref{fig:4taxa_chart} in the Appendix. When $\gamma=0.5$ and $\frac{1}{4}e^{-y}>\frac{1}{2}-\frac{5}{12}e^{-y}$, which is equivalent 
 to  $y<0.288$, the most likely gene tree given $\psi$ is not one of its base trees (that is, the network cannot be obtained by a adding 
 a single reticulation edge to the most likely gene tree). This also demonstrates that if we defined anomalies in terms of the set 
 ${\cal U}(\psi)$ instead of set ${\cal W}(\psi)$, the phylogenetic network would still produce anomalous gene trees. 
 
Given that the most likely gene tree is not necessarily a backbone of the phylogenetic network, we now turn our attention to three 
recent methods whose goal is to infer a species tree despite horizontal gene transfer. It is very important to point out upfront that the 
assumptions of these methods do not necessarily match the scenarios we investigate here, but our goal is to assess how well they 
do at recovering a backbone tree inside the network of Fig. \ref{fig:dc}(b). In \cite{davidson2015phylogenomic}, Davidson {\em et al.}  showed that
  ASTRAL-II \cite{mirarab2015astral} performed best among species tree inference methods in terms of recovering the species tree in the 
  presence of reticulation (under a specific model of horizontal gene transfer). They further proved that the method is statistically consistent 
  in terms of recovering the species tree under the same model. 
  In \cite{steel2013identifying}, Steel {\em et al.} showed that triplet-based approaches to species tree inference are 
 consistent in terms of inferring a species tree in the presence of horizontal gene transfer (also under a specific model).  This technique was 
 implemented as the ``primordial tree" in Dendroscope \cite{huson2012dendroscope}. Both ASTRAL-II and the primordial tree method in 
 Dendroscope take 
 gene trees as input. The method of Daskalakis and Roch \cite{daskalakis2015species} takes as input   
 gene trees with branch length and compute the distance between every two taxa $u$ and $v$ as the median 
 of the gene-tree distances between $u$ and $v$ over all gene trees in the data set (given a gene tree with branch lengths, the gene-tree 
 distance between two leaves is the sum of the branch lengths on the simple path between the two leaves). 
 
%  ms 4 2000 -T -I 4 1 1 1 1 -ej 500.0 3 2 -es 500.05 2 0.5 -ej 500.05 2 4 -ej 500.05 5 1 -ej 1000.05 4 1
 We simulated gene tree data sets under the phylogenetic network of Fig. \ref{fig:dc}(b) using ms \cite{hudson2002generating} while 
 varying branch length $y$ to take on values from the set $\{0.1,0.2,0.5,1.0\}$ ($w$ was set to $0$ and $x$ was set to $1000$ so as to rule out deep coalescence
 involving the two branches incident with the root). Data sets with $25$, $50$, $100$ and $200$ gene trees were generated, 
 and for each configuration of branch length $y$ and number of gene trees, 100 data sets were simulated. 
 The accuracy of each method for a setting of branch length $y$ and number of gene trees is the fraction, out of the 100 data sets, 
 of times that the method returned one of the two trees displayed by the network. The results for all three methods on the simulated data 
 are shown in Fig. \ref{methods:perf}. 
 \begin{figure}[!ht]
\begin{center}
\begin{tabular}{cc}
\includegraphics[width=0.4\textwidth]{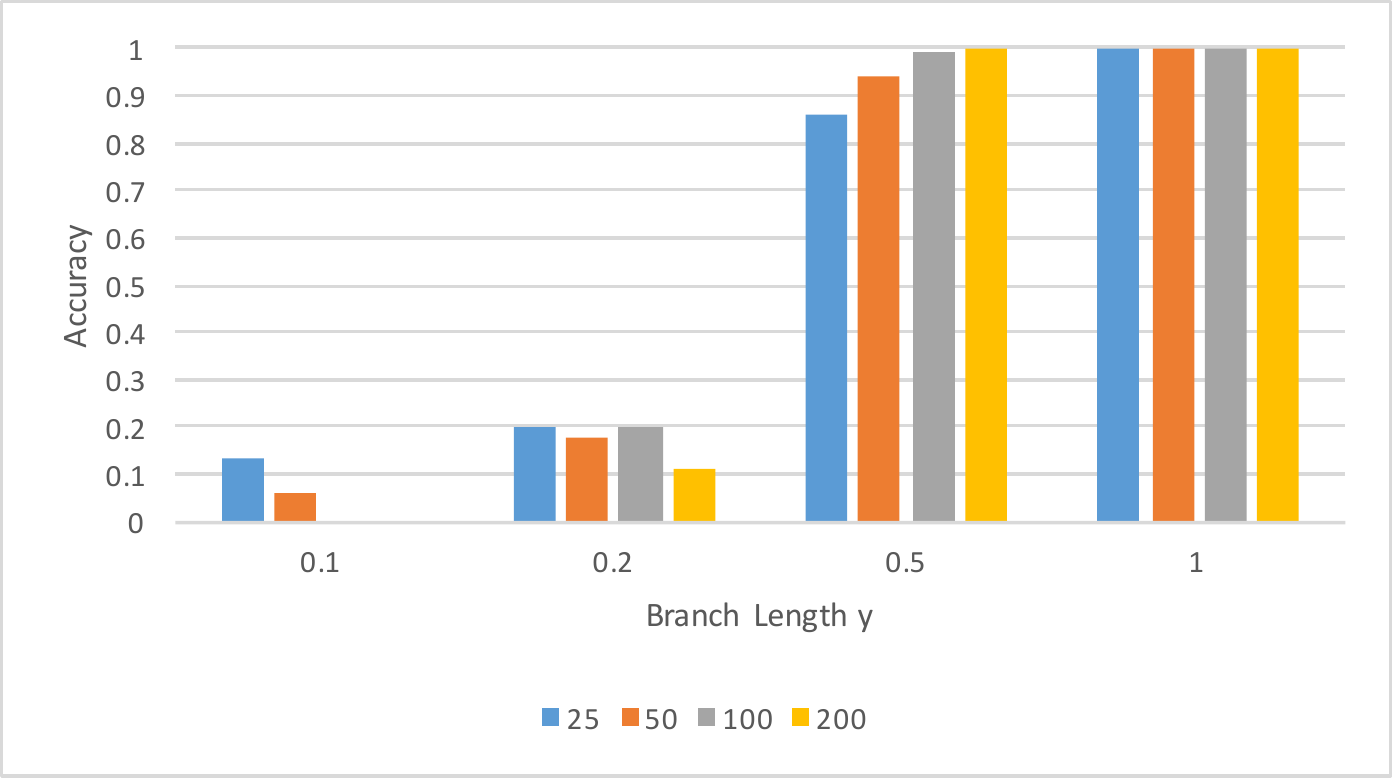} &  \includegraphics[width=0.4\textwidth]{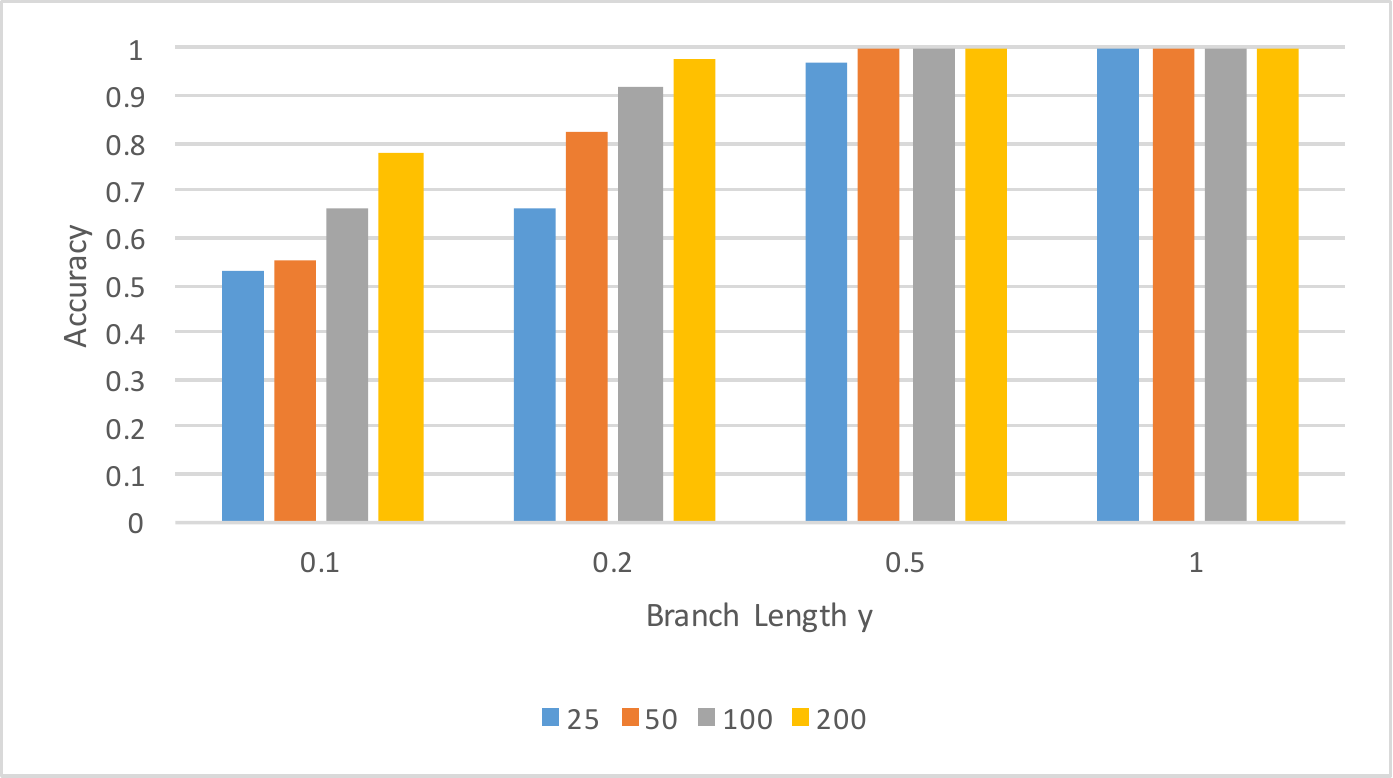} \\
 \includegraphics[width=0.4\textwidth]{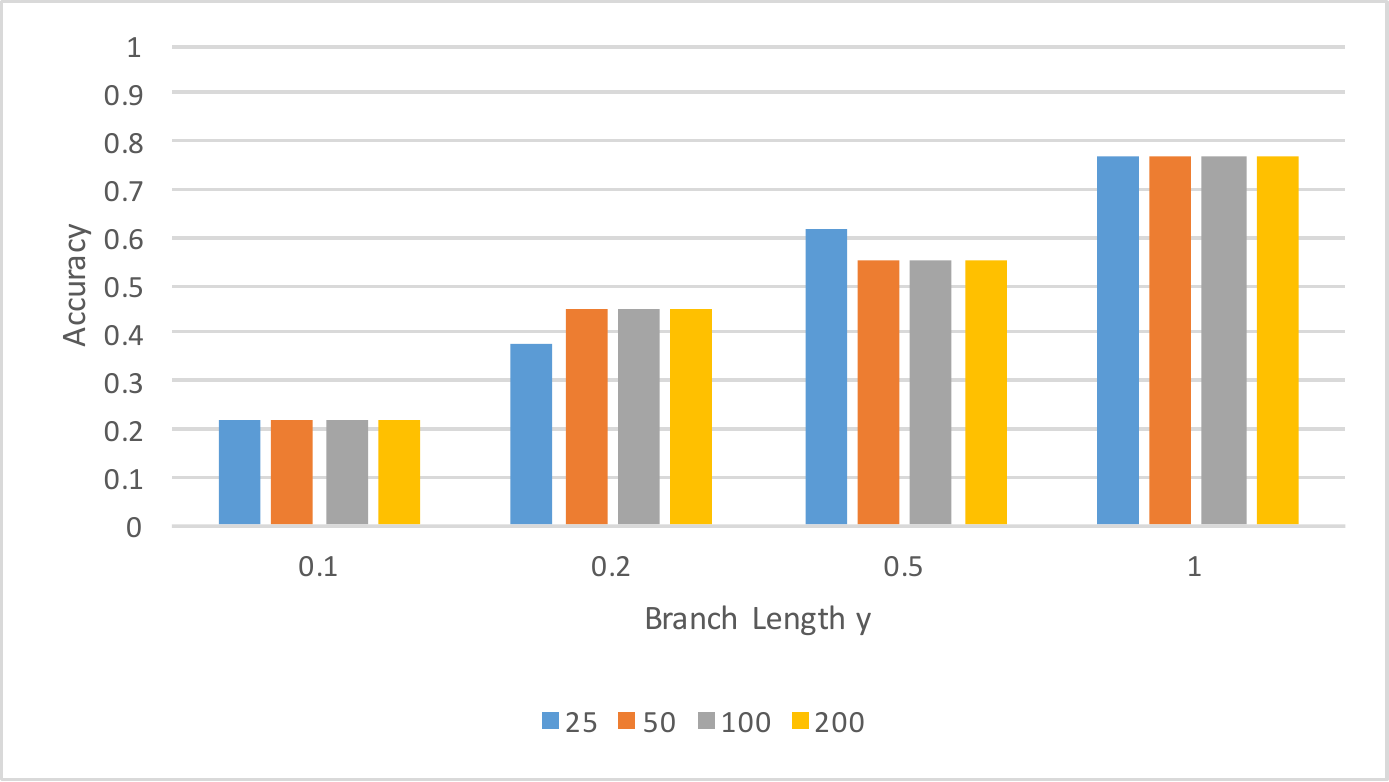} & \includegraphics[width=0.4\textwidth]{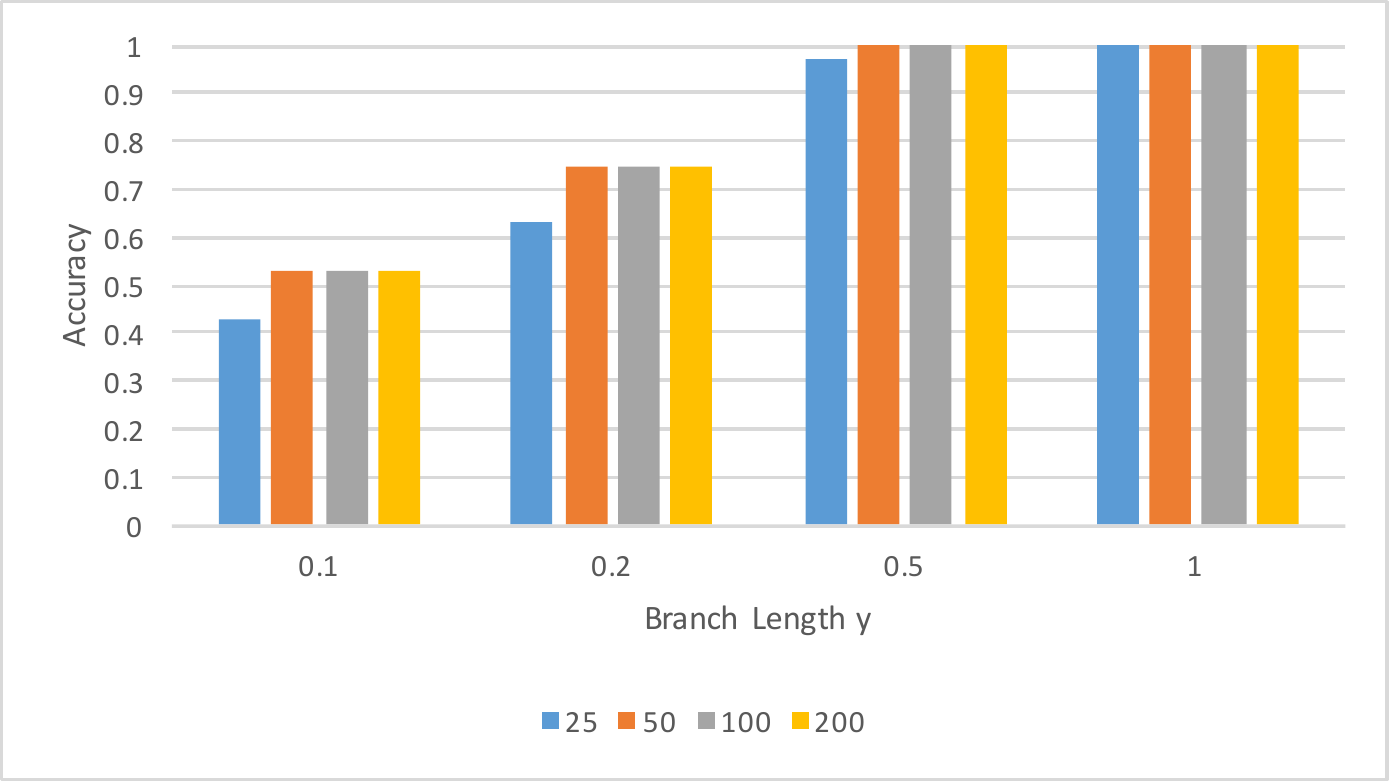}\\
 \includegraphics[width=0.4\textwidth]{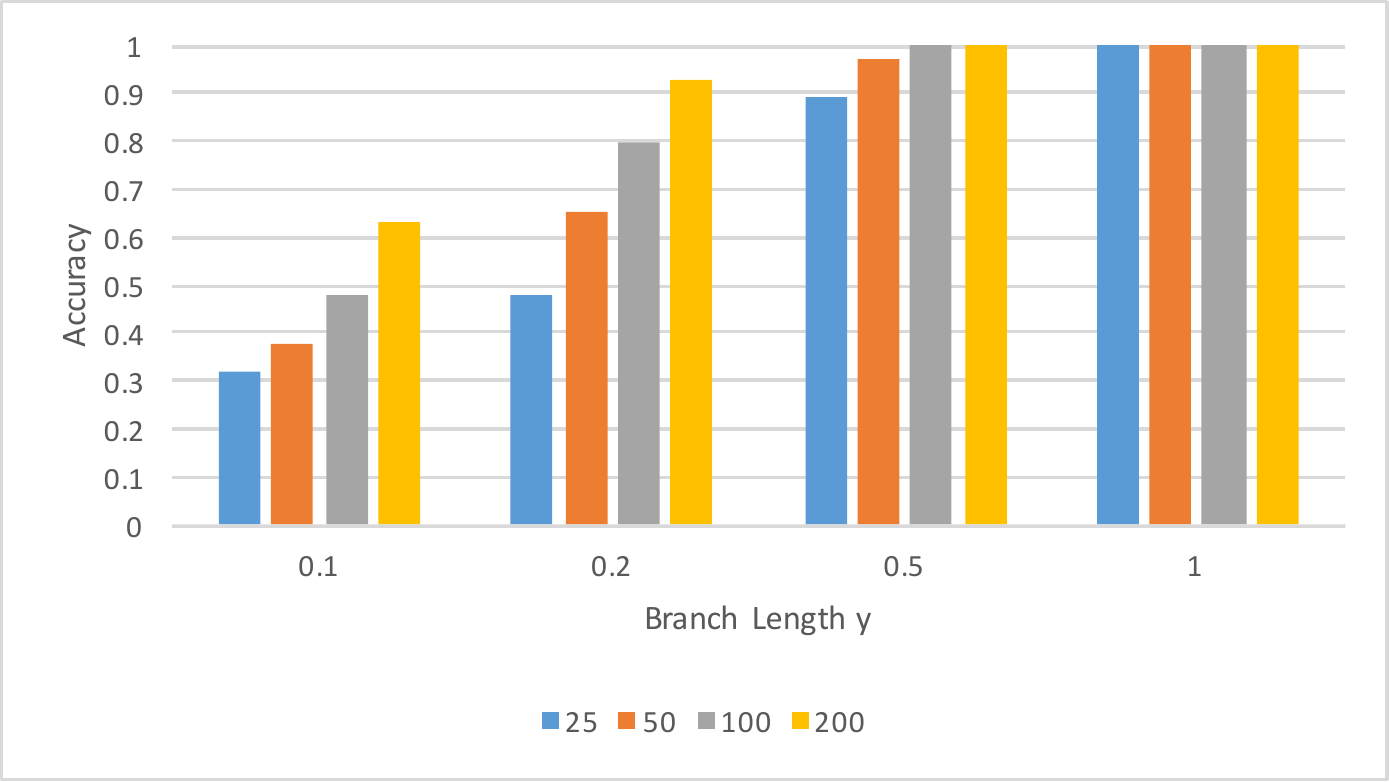} & \includegraphics[width=0.4\textwidth]{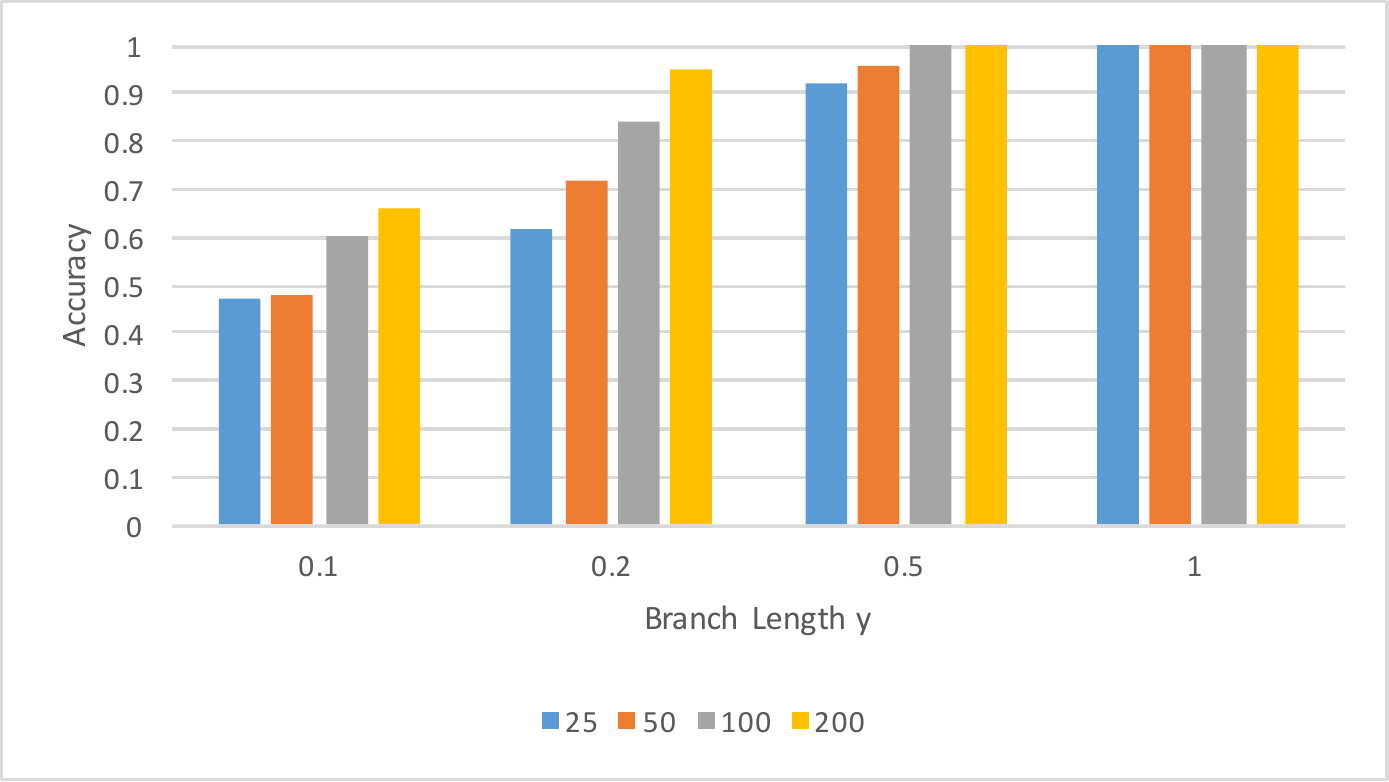} \\
\end{tabular}
\caption{The accuracy of three methods for inferring species trees in the presence of reticulation on data generated on the 
 phylogenetic network of Fig. \ref{fig:dc}(b). Left column corresponds to setting $\gamma=0.5$ and right column corresponds to setting $\gamma=0.05$. 
 Four settings for branch length $y$ were used, and for each setting data sets with 25, 50, 100, and 200 loci were generated. See the text for definition 
 of the accuracy measure. 
 (Top) ASTRAL-II \cite{mirarab2015astral}; (Middle) The 
 method of Steel {\em et al.}  \cite{steel2013identifying} as implemented in Dendroscope \cite{huson2012dendroscope}; (Bottom) Our own 
 implementation of the method of Daskalakis and Roch \cite{daskalakis2015species}.\label{methods:perf} }
\end{center}
\end{figure}

The results show that when $y$ is very small, the methods perform poorly in terms of returning one of the two trees displayed by the network,
 especially in the case of $\gamma=0.5$. This is expected as an inheritance probability of $0.5$ is a huge deviation from the assumptions of the 
 three methods. When $\gamma=0.5$ and $y$ is long enough (e.g., $1$), ASTRAL-II  and  the method of \cite{daskalakis2015species}
 do a perfect job, while the method of \cite{steel2013identifying} does not perform as well. For smaller values of $y$ and with $\gamma=0.5$, 
 the method of \cite{daskalakis2015species} consistently performs better than the other two methods. For $\gamma=0.05$, which is closer to the 
 assumptions of the  methods, all three of them perform well, even when $y=0.5$ (in this case, the most likely gene tree is also a backbone tree). For smaller values of $y$ in this case, ASTRAL-II and the method of \cite{daskalakis2015species} do almost equally well, and slightly better than the method of \cite{steel2013identifying}. 
 
 \vspace{-.15in}
\subsection*{From gene trees to species networks via parental trees: A clustering approach}
 Given our discussion above of the set of parental trees, one can view a phylogenetic network 
 $\psi$ as a mixture model with $|{\cal W}(\psi)|$ components and each component is a distribution 
 on gene trees defined by the parental tree corresponding to that component. 
  This view gives rise to a novel approach for reconstructing phylogenetic networks from a set ${\cal G}$ 
 of gene trees:
 	\begin{enumerate}
	\itemsep -1pt
	\item Cluster the gene trees into clusters $C_1,C_2,\ldots,C_k$;
	\item Infer a parental tree $T_i$ for cluster $C_i$ under the multispecies coalescent; 
	\item Combine the trees $T_1,T_2,\ldots,T_k$ into a phylogenetic network $\psi$. 
	\end{enumerate}
 The rationale behind this approach is that clustering would identify the components of the mixture model, 
 where the gene trees belonging to a component differ only because of incomplete lineage sorting (ILS), but not 
 because of hybridization. That is why in Step (2) a tree is inferred for each component under the multispecies 
 coalescent, which only handles ILS. In the third step, disagreements among the $k$ trees is assumed to be 
 all due to the hybridization events, and are used to obtain the final network. 
 A parsimony approach to Step (3) would be formulated as follows. 
 %which we 
 %pose as an open problem for future research. 
 \vspace{-.15in}
 \begin{definition} The Parental Tree Network Problem is defined as:
 \begin{itemize}
 \itemsep -1pt
 \item[] {\bf Input:} A set ${\cal P}$ of parental trees. 
 \item[] {\bf Output:} A phylogenetic network $\psi$ with the smallest number of reticulation nodes such that ${\cal P} \subseteq {\cal W}(\psi)$. 
 \end{itemize}
 \end{definition}
  \vspace{-.15in}
  
  In \cite{Gori01062016}, Gori {\em et al.} studied the performance of various combinations of clustering methods and dissimilarity measures on 
  gene tree topologies as well as gene trees with branch lengths. In our work here, the focus is on phylogenetic network inference and our 
  simulation study in what follows is preliminary and aimed at demonstrating the viability of this approach. 
  
 We used 10 phylogenetic networks (Fig. \ref{fig:clustering_result}(a)) to generate within each gene tree 
 data sets (50, 250, 500, and 1000 gene trees per data set and 30 data sets per configuration). 
% \begin{figure}[!ht]
%  \centering
%  \includegraphics[width=0.2\textwidth]{figures/Clustering1}
%   \vspace{-.15in}
%  \caption{The phylogenetic network used for the clustering results. The lengths of the two horizontal edges set to $0$ 
% and the indicated edge length to $0.2$ along with the specified inheritance probability of $0.35$. Ten networks were generated by setting the length of each other internal branch to a random number uniformly sampled in the range $[0.7,1.3]$. 
%  \label{fig:clustering}}
%\end{figure} 
\begin{figure}[!ht]
  \centering
%\begin{tabular}{cccc} 
 \includegraphics[width=0.9\textwidth]{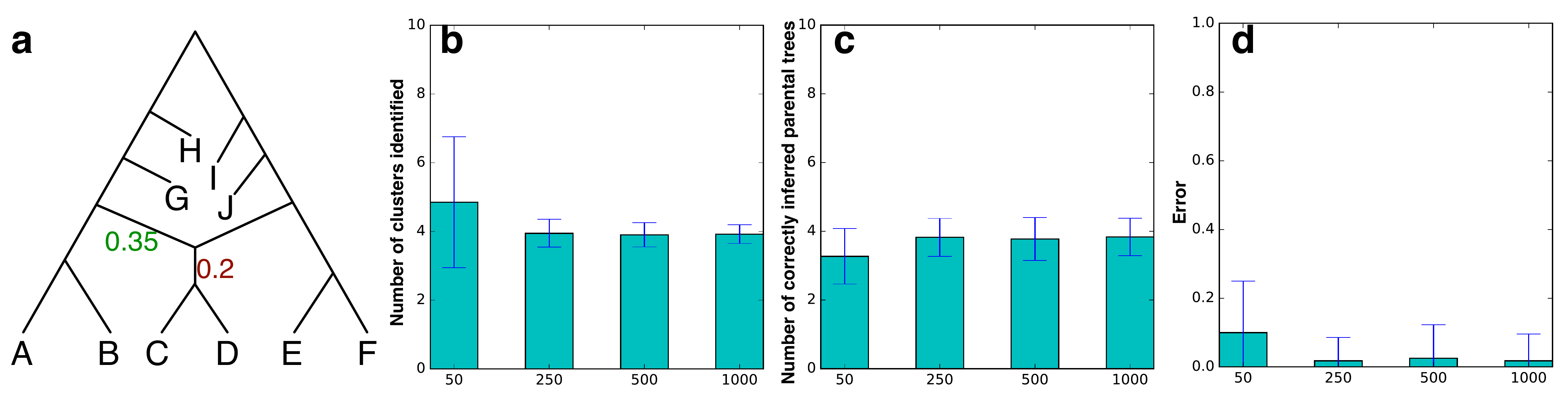} 
%& \includegraphics[width=0.26\textwidth]{figures/clustering_result1.pdf} \hspace{-.3in}
% &  \includegraphics[width=0.26\textwidth]{figures/clustering_result2.pdf} \hspace{-.3in}
% &  \includegraphics[width=0.26\textwidth]{figures/clustering_result3.pdf} \\ 
%\end{tabular}
 \vspace{-.15in}
  \caption{Performance of the clustering approach on the simulated data as a function of the number of 
  gene trees. (a) The phylogenetic network used in the simulations. The lengths of the two horizontal edges set to $0$ 
 and the indicated edge length to $0.2$ along with the specified inheritance probability of $0.35$. Ten networks were generated by setting the length of each other internal branch to a random number uniformly sampled in the range $[0.7,1.3]$. 
   (b) The number of clusters identified (averaged over 300 data sets for each bar).  
   (c) The number of correctly inferred parental trees (out of the maximum of four parental trees). 
   (d) The error between the set of inferred trees from the identified clusters and the set of four parental trees of the network. 
   The x-axis in panels (b)-(d) corresponds to the number of gene trees. 
  \label{fig:clustering_result}}
\end{figure}
  For each gene tree data set, pairwise Robinson-Foulds (RF) \cite{RF} distances were computed between the gene trees, and the pairwise distances were converted 
 into 3-dimensional points in Euclidean space using multidimensional scaling (MDS) as implemented in the  MDSJ package \cite{mdsj} (we also conducted clustering 
 directly on the RF distances, and found a significant improvement in the results after applying MDS). We implemented the $k$-means 
  clustering algorithm \cite{macqueen1967some} and used it to cluster the gene trees based on the Euclidean distances from MDS using $k=2,3,\ldots,10$. 
  We  implemented the silhouette method \cite{rousseeuw1987silhouettes} and the number of clusters with 
 the maximum average silhouette (based on the pairwise RF distances)  was selected as the number of clusters identified and the corresponding clustering as the identified clusters. 
  
  Fig. \ref{fig:clustering_result}(b) shows the results of identifying the number of clusters (the correct number is $4$). As the figure shows, 
  clustering in this case is performing very well, returning the correct number of clusters in almost all cases with 250 gene trees or more, and performing 
  slightly poorer in the case of 50 gene trees. 
  
After the clusters were identified, we turned to the next natural question: Do the clusters correspond to the parental trees of the 
network? To investigate this question, we chose to apply the ``minimizing deep coalescence" (MDC) method of \cite{ThanNakhleh-PLoSCB09} as 
implemented in \cite{ThanEtAl08} (the heuristic version that uses only the clusters in the input gene trees) to infer a ``species tree" on each cluster. 
We then quantified the number of true parental trees that were inferred by MDC on the clusters in each data set. The results are shown in Fig. \ref{fig:clustering_result}(c). The results indicate a very good performance where all four true parental trees are almost always correctly 
inferred, particularly when 250 gene trees or more are used. 
  
  Finally, when this MDC-based analysis returns trees other than the true parental trees, how far as they from the true ones? To answer this 
  question we compared the the set of true parental trees and the set of trees inferred by MDC based on the identified clusters using the 
  tree-based measure of \cite{warnow2003towards} (finding the min-weight edge cover of a bipartite graph whose two sets of nodes correspond to these two sets of trees 
  and the weights of edges are RF distance) as implemented in PhyloNet \cite{ThanEtAl08}. The results are shown in Fig. \ref{fig:clustering_result}(d).
  The results indicate a very good performance of about 2\% error for data sets with 250 gene trees or more, and about 10\% for data sets 
  with 50 gene trees. 
  
  It is worth mentioning that if a network that displays all gene trees in the input was sought, the result would be a network that differs significantly 
  from the true network, as each data set contained many distinct gene tree topologies. This highlights the major difference between the current 
  practice of seeking a network
  that displays all gene trees in the input and our proposed approach of seeking a network whose parental trees are obtained from the input 
  gene trees.

\vspace{-.15in}
\section*{Conclusions}
 In this paper, we showed that when deep coalescence occurs, inference and analysis of phylogenetic networks are more adequately 
 done with respect to the set of parental trees of the network, rather than the common practice of using the set of trees displayed by the network.
 We described the simple procedure for enumerating the set of parental trees of a given network, and based on this set, we made three contributions. 
 First, we defined the anomaly zone for a phylogenetic network topology as the region of branch lengths and inheritance probabilities under which 
 the most likely gene tree is not one of the parental trees inside the network. We provided straightforward results on the anomaly zones for networks
 that mainly result from the fact that networks are an extension of trees. An important question is whether it is possible that none of the trees displayed 
 by a network has an anomaly zone, yet the network itself has one. 
 
 In many cases, biologists are interested in identifying the species tree despite reticulations. We demonstrated that in the presence of deep coalescence, 
 the most likely gene tree is not necessarily one of the backbone trees inside the network. Furthermore, we studied the performance of three recently 
 introduced methods in terms of their ability to recover a backbone tree inside the network. We found the none of these methods performs well when 
 deep coalescence is extensive. It is important to point out, though, that none of these methods were designed specifically for cases of hybridization, where 
 multiple genomic loci could be introgressed due to the same hybridization event. However, our findings here call for more research into the question of 
 identifying a species tree inside the network, when one exists. However, biologically, reticulation could be extensive, such as reported recently in an analysis of a 
 mosquito data set \cite{FontaineEtAl15,WenEtAl16}, in which case, designating a ``species tree" might not be adequate \cite{clark2015conundrum}. From a computational perspective, identifying such a tree aids significantly in searching for 
 networks from data \cite{yu2014maximum,WenEtAl16a} as they can serve as the starting phylogeny to which reticulation edges could be added. 
 
 Finally, many existing approaches for network inference rely on the assumption that the input gene trees are a subset of the set of trees displayed 
 by a network and, consequently, seek to infer a phylogenetic network that displays all the gene trees. In the presence of deep coalescence, this approach 
 would result in very erroneous networks. We argued that in this situation, parental trees need to be inferred first from gene trees and then a network that 
 contains the inferred parental trees could be estimated. To demonstrate the merit for this approach, we introduced a method by which gene trees are first 
 clustered and then parental trees are inferred for the clusters. The results were very promising for this clustering-based 
 approach to be pursued further. In terms of network inference, this approach gives rise to a new computational problem in which a network is sought to 
 contain a given set of parental trees. It is important to acknowledge here that our performance study of the clustering approach is very preliminary and is 
 aimed at introducing the problem and demonstrating its merit in a relatively ideal setting. We identify as a direction for future research a thorough analysis that examines, 
 among many other aspects, the 
 effects of errors in gene tree estimates (as opposed to using true gene trees), larger variations in the network's branch lengths, and the number of reticulations in 
 the network, on the performance of the approach.

%%%%%%%%%%%%%%%%%%%%%%%%%%%%%%%%%%%%%%%%%%%%%%
%%                                          %%
%% Backmatter begins here                   %%
%%                                          %%
%%%%%%%%%%%%%%%%%%%%%%%%%%%%%%%%%%%%%%%%%%%%%%

\begin{backmatter}

%\section*{Abbreviations}
%\begin{itemize}
%\item[] 
%\end{itemize}

%\section*{Competing interests}
% The authors declare that they have no competing interests.
%
%\section*{Author's contributions}
% JZ, YY and LN conceived of the study, designed the methods, analyzed the data, and wrote the manuscript. 
% JZ ran the simulations.
%
\vspace{-.2in}
\section*{Acknowledgements}
  This work was supported in part by grant CCF-1302179 from the National Science Foundation of the United 
  States of America.

%%%%%%%%%%%%%%%%%%%%%%%%%%%%%%%%%%%%%%%%%%%%%%%%%%%%%%%%%%%%%
%%                  The Bibliography                       %%
%%                                                         %%
%%  Bmc_mathpys.bst  will be used to                       %%
%%  create a .BBL file for submission.                     %%
%%  After submission of the .TEX file,                     %%
%%  you will be prompted to submit your .BBL file.         %%
%%                                                         %%
%%                                                         %%
%%  Note that the displayed Bibliography will not          %%
%%  necessarily be rendered by Latex exactly as specified  %%
%%  in the online Instructions for Authors.                %%
%%                                                         %%
%%%%%%%%%%%%%%%%%%%%%%%%%%%%%%%%%%%%%%%%%%%%%%%%%%%%%%%%%%%%%

%\newpage
% if your bibliography is in bibtex format, use those commands:
\bibliographystyle{bmc-mathphys} % Style BST file (bmc-mathphys, vancouver, spbasic).
%% BioMed_Central_Bib_Style_v1.01

\newcommand{\BMCxmlcomment}[1]{}

\BMCxmlcomment{

<refgrp>

<bibl id="B1">
  <title><p>Application of Phylogenetic Networks in Evolutionary
  Studies</p></title>
  <aug>
    <au><snm>Huson</snm><fnm>D.H.</fnm></au>
    <au><snm>Bryant</snm><fnm>D.</fnm></au>
  </aug>
  <source>Molecular Biology and Evolution</source>
  <pubdate>2006</pubdate>
  <volume>23</volume>
  <issue>2</issue>
  <fpage>254</fpage>
  <lpage>267</lpage>
</bibl>

<bibl id="B2">
  <title><p>Evolutionary phylogenetic networks: models and issues</p></title>
  <aug>
    <au><snm>Nakhleh</snm><fnm>L</fnm></au>
  </aug>
  <source>Problem solving handbook in computational biology and
  bioinformatics</source>
  <publisher>New York: Springer</publisher>
  <pubdate>2010</pubdate>
  <fpage>125</fpage>
  <lpage>-158</lpage>
</bibl>

<bibl id="B3">
  <title><p>Networks: expanding evolutionary thinking</p></title>
  <aug>
    <au><snm>Bapteste</snm><fnm>E</fnm></au>
    <au><snm>Iersel</snm><fnm>L</fnm></au>
    <au><snm>Janke</snm><fnm>A</fnm></au>
    <au><snm>Kelchner</snm><fnm>S</fnm></au>
    <au><snm>Kelk</snm><fnm>S</fnm></au>
    <au><snm>McInerney</snm><fnm>JO</fnm></au>
    <au><snm>Morrison</snm><fnm>DA</fnm></au>
    <au><snm>Nakhleh</snm><fnm>L</fnm></au>
    <au><snm>Steel</snm><fnm>M</fnm></au>
    <au><snm>Stougie</snm><fnm>L</fnm></au>
    <au><snm>Whitefield</snm><fnm>J</fnm></au>
  </aug>
  <source>Trends in Genetics</source>
  <publisher>Elsevier</publisher>
  <pubdate>2013</pubdate>
  <volume>29</volume>
  <issue>8</issue>
  <fpage>439</fpage>
  <lpage>441</lpage>
</bibl>

<bibl id="B4">
  <title><p>Phylogenetic Networks: Concepts, Algorithms and
  Applications</p></title>
  <aug>
    <au><snm>Huson</snm><fnm>D.H.</fnm></au>
    <au><snm>Rupp</snm><fnm>R.</fnm></au>
    <au><snm>Scornavacca</snm><fnm>C.</fnm></au>
  </aug>
  <publisher>New York: Cambridge University Press</publisher>
  <pubdate>2010</pubdate>
</bibl>

<bibl id="B5">
  <title><p>Introduction to phylogenetic networks</p></title>
  <aug>
    <au><snm>Morrison</snm><fnm>DA</fnm></au>
  </aug>
  <publisher>Sweden: RJR Productions</publisher>
  <pubdate>2011</pubdate>
</bibl>

<bibl id="B6">
  <title><p>ReCombinatorics: the algorithmics of ancestral recombination graphs
  and explicit phylogenetic networks</p></title>
  <aug>
    <au><snm>Gusfield</snm><fnm>D</fnm></au>
  </aug>
  <publisher>Boston: MIT Press</publisher>
  <pubdate>2014</pubdate>
</bibl>

<bibl id="B7">
  <title><p>Perfect phylogenetic networks with recombination</p></title>
  <aug>
    <au><snm>Wang</snm><fnm>L</fnm></au>
    <au><snm>Zhang</snm><fnm>K</fnm></au>
    <au><snm>Zhang</snm><fnm>L</fnm></au>
  </aug>
  <source>Journal of Computational Biology</source>
  <publisher>Mary Ann Liebert, Inc.</publisher>
  <pubdate>2001</pubdate>
  <volume>8</volume>
  <issue>1</issue>
  <fpage>69</fpage>
  <lpage>-78</lpage>
</bibl>

<bibl id="B8">
  <title><p>Perfect phylogenetic networks: A new methodology for reconstructing
  the evolutionary history of natural languages</p></title>
  <aug>
    <au><snm>Nakhleh</snm><fnm>L</fnm></au>
    <au><snm>Ringe</snm><fnm>D</fnm></au>
    <au><snm>Warnow</snm><fnm>T</fnm></au>
  </aug>
  <source>Language</source>
  <publisher>JSTOR</publisher>
  <pubdate>2005</pubdate>
  <fpage>382</fpage>
  <lpage>-420</lpage>
</bibl>

<bibl id="B9">
  <title><p>A decomposition theory for phylogenetic networks and incompatible
  characters</p></title>
  <aug>
    <au><snm>Gusfield</snm><fnm>D</fnm></au>
    <au><snm>Bansal</snm><fnm>V</fnm></au>
    <au><snm>Bafna</snm><fnm>V</fnm></au>
    <au><snm>Song</snm><fnm>YS</fnm></au>
  </aug>
  <source>Journal of Computational Biology</source>
  <publisher>Mary Ann Liebert, Inc. 2 Madison Avenue Larchmont, NY 10538
  USA</publisher>
  <pubdate>2007</pubdate>
  <volume>14</volume>
  <issue>10</issue>
  <fpage>1247</fpage>
  <lpage>-1272</lpage>
</bibl>

<bibl id="B10">
  <title><p>Efficient reconstruction of phylogenetic networks with constrained
  recombination</p></title>
  <aug>
    <au><snm>Gusfield</snm><fnm>D</fnm></au>
    <au><snm>Eddhu</snm><fnm>S</fnm></au>
    <au><snm>Langley</snm><fnm>C</fnm></au>
  </aug>
  <source>Bioinformatics Conference, 2003. CSB 2003. Proceedings of the 2003
  IEEE</source>
  <pubdate>2003</pubdate>
  <fpage>363</fpage>
  <lpage>-374</lpage>
</bibl>

<bibl id="B11">
  <title><p>Algorithms to distinguish the role of gene-conversion from
  single-crossover recombination in the derivation of SNP sequences in
  populations</p></title>
  <aug>
    <au><snm>Song</snm><fnm>YS</fnm></au>
    <au><snm>Ding</snm><fnm>Z</fnm></au>
    <au><snm>Gusfield</snm><fnm>D</fnm></au>
    <au><snm>Langley</snm><fnm>CH</fnm></au>
    <au><snm>Wu</snm><fnm>Y</fnm></au>
  </aug>
  <source>Research in Computational Molecular Biology</source>
  <pubdate>2006</pubdate>
  <fpage>231</fpage>
  <lpage>-245</lpage>
</bibl>

<bibl id="B12">
  <title><p>Parsimonious reconstruction of sequence evolution and haplotype
  blocks</p></title>
  <aug>
    <au><snm>Song</snm><fnm>YS</fnm></au>
    <au><snm>Hein</snm><fnm>J</fnm></au>
  </aug>
  <source>Lecture Notes in Bioinformatics</source>
  <publisher>Berlin Heidelberg: Springer</publisher>
  <pubdate>2003</pubdate>
  <volume>2812</volume>
  <fpage>287</fpage>
  <lpage>-302</lpage>
</bibl>

<bibl id="B13">
  <title><p>On the minimum number of recombination events in the evolutionary
  history of DNA sequences</p></title>
  <aug>
    <au><snm>Song</snm><fnm>YS</fnm></au>
    <au><snm>Hein</snm><fnm>J</fnm></au>
  </aug>
  <source>Journal of Mathematical Biology</source>
  <publisher>Springer</publisher>
  <pubdate>2004</pubdate>
  <volume>48</volume>
  <issue>2</issue>
  <fpage>160</fpage>
  <lpage>-186</lpage>
</bibl>

<bibl id="B14">
  <title><p>Constructing minimal ancestral recombination graphs</p></title>
  <aug>
    <au><snm>Song</snm><fnm>YS</fnm></au>
    <au><snm>Hein</snm><fnm>J</fnm></au>
  </aug>
  <source>Journal of Computational Biology</source>
  <publisher>Mary Ann Liebert, Inc. 2 Madison Avenue Larchmont, NY 10538
  USA</publisher>
  <pubdate>2005</pubdate>
  <volume>12</volume>
  <issue>2</issue>
  <fpage>147</fpage>
  <lpage>-169</lpage>
</bibl>

<bibl id="B15">
  <title><p>Reconstructing evolution of sequences subject to recombination
  using parsimony</p></title>
  <aug>
    <au><snm>Hein</snm><fnm>J.</fnm></au>
  </aug>
  <source>Mathematical Biosciences</source>
  <pubdate>1990</pubdate>
  <volume>98</volume>
  <fpage>185</fpage>
  <lpage>-200</lpage>
</bibl>

<bibl id="B16">
  <title><p>Reconstructing phylogenetic networks using maximum
  parsimony</p></title>
  <aug>
    <au><snm>Nakhleh</snm><fnm>L.</fnm></au>
    <au><snm>Jin</snm><fnm>G.</fnm></au>
    <au><snm>Zhao</snm><fnm>F.</fnm></au>
    <au><snm>Mellor Crummey</snm><fnm>J.</fnm></au>
  </aug>
  <source>Proceedings of the 2005 IEEE Computational Systems Bioinformatics
  Conference (CSB2005)</source>
  <pubdate>2005</pubdate>
  <fpage>93</fpage>
  <lpage>102</lpage>
</bibl>

<bibl id="B17">
  <title><p>Efficient parsimony-based methods for phylogenetic network
  reconstruction</p></title>
  <aug>
    <au><snm>Jin</snm><fnm>G.</fnm></au>
    <au><snm>Nakhleh</snm><fnm>L.</fnm></au>
    <au><snm>Snir</snm><fnm>S.</fnm></au>
    <au><snm>Tuller</snm><fnm>T.</fnm></au>
  </aug>
  <source>Bioinformatics</source>
  <pubdate>2006</pubdate>
  <volume>23</volume>
  <fpage>e123</fpage>
  <lpage>e128</lpage>
  <note>Proceedings of the European Conference on Computational Biology (ECCB
  06)</note>
</bibl>

<bibl id="B18">
  <title><p>Maximum likelihood of phylogenetic networks</p></title>
  <aug>
    <au><snm>Jin</snm><fnm>G.</fnm></au>
    <au><snm>Nakhleh</snm><fnm>L.</fnm></au>
    <au><snm>Snir</snm><fnm>S.</fnm></au>
    <au><snm>Tuller</snm><fnm>T.</fnm></au>
  </aug>
  <source>Bioinformatics</source>
  <pubdate>2006</pubdate>
  <volume>22</volume>
  <issue>21</issue>
  <fpage>2604</fpage>
  <lpage>2611</lpage>
</bibl>

<bibl id="B19">
  <title><p>A New Linear-time Heuristic Algorithm for Computing the Parsimony
  Score of Phylogenetic Networks: Theoretical Bounds and Empirical
  Performance</p></title>
  <aug>
    <au><snm>Jin</snm><fnm>G.</fnm></au>
    <au><snm>Nakhleh</snm><fnm>L.</fnm></au>
    <au><snm>Snir</snm><fnm>S.</fnm></au>
    <au><snm>Tuller</snm><fnm>T.</fnm></au>
  </aug>
  <source>Proceedings of the International Symposium on Bioinformatics Research
  and Applications</source>
  <editor>I. Mandoiu and A. Zelikovsky</editor>
  <pubdate>2007</pubdate>
  <inpress />
</bibl>

<bibl id="B20">
  <title><p>Inferring phylogenetic networks by the maximum parsimony criterion:
  a case study</p></title>
  <aug>
    <au><snm>Jin</snm><fnm>G.</fnm></au>
    <au><snm>Nakhleh</snm><fnm>L.</fnm></au>
    <au><snm>Snir</snm><fnm>S.</fnm></au>
    <au><snm>Tuller</snm><fnm>T.</fnm></au>
  </aug>
  <source>Mol. Biol. Evol.</source>
  <pubdate>2007</pubdate>
  <volume>24</volume>
  <issue>1</issue>
  <fpage>324</fpage>
  <lpage>337</lpage>
</bibl>

<bibl id="B21">
  <title><p>Hybrids in real time</p></title>
  <aug>
    <au><snm>Baroni</snm><fnm>M.</fnm></au>
    <au><snm>Semple</snm><fnm>C.</fnm></au>
    <au><snm>Steel</snm><fnm>M.</fnm></au>
  </aug>
  <source>Syst. Biol.</source>
  <pubdate>2006</pubdate>
  <volume>55</volume>
  <issue>1</issue>
  <fpage>46</fpage>
  <lpage>56</lpage>
</bibl>

<bibl id="B22">
  <title><p>Summarizing Multiple Gene Trees Using Cluster Networks</p></title>
  <aug>
    <au><snm>Huson</snm><fnm>D.H.</fnm></au>
    <au><snm>Rupp</snm><fnm>R.</fnm></au>
  </aug>
  <source>Proceedings of the Workshop on Algorithms in Bioinformatics</source>
  <editor>K.A. Crandall and J. Lagergren</editor>
  <series><title><p>Lecture Notes in Bioinformatics</p></title></series>
  <pubdate>2008</pubdate>
  <volume>5251</volume>
  <fpage>296</fpage>
  <lpage>305</lpage>
</bibl>

<bibl id="B23">
  <title><p>Phylogenetic networks do not need to be complex: using fewer
  reticulations to represent conflicting clusters</p></title>
  <aug>
    <au><snm>Van Iersel</snm><fnm>L</fnm></au>
    <au><snm>Kelk</snm><fnm>S</fnm></au>
    <au><snm>Rupp</snm><fnm>R</fnm></au>
    <au><snm>Huson</snm><fnm>D</fnm></au>
  </aug>
  <source>Bioinformatics</source>
  <publisher>Oxford Univ Press</publisher>
  <pubdate>2010</pubdate>
  <volume>26</volume>
  <issue>12</issue>
  <fpage>i124</fpage>
  <lpage>-i131</lpage>
</bibl>

<bibl id="B24">
  <title><p>An algorithm for constructing parsimonious hybridization networks
  with multiple phylogenetic trees</p></title>
  <aug>
    <au><snm>Wu</snm><fnm>Y</fnm></au>
  </aug>
  <source>Journal of Computational Biology</source>
  <publisher>Mary Ann Liebert, Inc. 140 Huguenot Street, 3rd Floor New
  Rochelle, NY 10801 USA</publisher>
  <pubdate>2013</pubdate>
  <volume>20</volume>
  <issue>10</issue>
  <fpage>792</fpage>
  <lpage>-804</lpage>
</bibl>

<bibl id="B25">
  <title><p>Reconstructible phylogenetic networks: do not distinguish the
  indistinguishable</p></title>
  <aug>
    <au><snm>Pardi</snm><fnm>F</fnm></au>
    <au><snm>Scornavacca</snm><fnm>C</fnm></au>
  </aug>
  <source>PLoS Comput Biol</source>
  <publisher>Public Library of Science</publisher>
  <pubdate>2015</pubdate>
  <volume>11</volume>
  <issue>4</issue>
  <fpage>e1004135</fpage>
</bibl>

<bibl id="B26">
  <title><p>Reconstructing evolution of natural languages: Complexity and
  parameterized algorithms</p></title>
  <aug>
    <au><snm>Kanj</snm><fnm>IA</fnm></au>
    <au><snm>Nakhleh</snm><fnm>L</fnm></au>
    <au><snm>Xia</snm><fnm>G</fnm></au>
  </aug>
  <source>Computing and Combinatorics</source>
  <publisher>New York: Springer</publisher>
  <pubdate>2006</pubdate>
  <fpage>299</fpage>
  <lpage>-308</lpage>
</bibl>

<bibl id="B27">
  <title><p>Computing the hybridization number of two phylogenetic trees is
  fixed-parameter tractable</p></title>
  <aug>
    <au><snm>Bordewich</snm><fnm>M.</fnm></au>
    <au><snm>Semple</snm><fnm>C.</fnm></au>
  </aug>
  <source>IEEE/ACM Transactions on Computational Biology and
  Bioinformatics</source>
  <pubdate>2007</pubdate>
  <inpress />
</bibl>

<bibl id="B28">
  <title><p>Seeing the trees and their branches in the network is
  hard</p></title>
  <aug>
    <au><snm>Kanj</snm><fnm>IA</fnm></au>
    <au><snm>Nakhleh</snm><fnm>L</fnm></au>
    <au><snm>Than</snm><fnm>C</fnm></au>
    <au><snm>Xia</snm><fnm>G</fnm></au>
  </aug>
  <source>Theoretical Computer Science</source>
  <publisher>Elsevier</publisher>
  <pubdate>2008</pubdate>
  <volume>401</volume>
  <issue>1</issue>
  <fpage>153</fpage>
  <lpage>-164</lpage>
</bibl>

<bibl id="B29">
  <title><p>The compatibility of binary characters on phylogenetic networks:
  complexity and parameterized algorithms</p></title>
  <aug>
    <au><snm>Kanj</snm><fnm>IA</fnm></au>
    <au><snm>Nakhleh</snm><fnm>L</fnm></au>
    <au><snm>Xia</snm><fnm>G</fnm></au>
  </aug>
  <source>Algorithmica</source>
  <publisher>Springer</publisher>
  <pubdate>2008</pubdate>
  <volume>51</volume>
  <issue>2</issue>
  <fpage>99</fpage>
  <lpage>-128</lpage>
</bibl>

<bibl id="B30">
  <title><p>Locating a tree in a phylogenetic network</p></title>
  <aug>
    <au><snm>Van Iersel</snm><fnm>L</fnm></au>
    <au><snm>Semple</snm><fnm>C</fnm></au>
    <au><snm>Steel</snm><fnm>M</fnm></au>
  </aug>
  <source>Information Processing Letters</source>
  <publisher>Elsevier</publisher>
  <pubdate>2010</pubdate>
  <volume>110</volume>
  <issue>23</issue>
  <fpage>1037</fpage>
  <lpage>-1043</lpage>
</bibl>

<bibl id="B31">
  <title><p>When two trees go to war</p></title>
  <aug>
    <au><snm>Van Iersel</snm><fnm>L</fnm></au>
    <au><snm>Kelk</snm><fnm>S</fnm></au>
  </aug>
  <source>Journal of theoretical biology</source>
  <publisher>Elsevier</publisher>
  <pubdate>2011</pubdate>
  <volume>269</volume>
  <issue>1</issue>
  <fpage>245</fpage>
  <lpage>-255</lpage>
</bibl>

<bibl id="B32">
  <title><p>Identifying a species tree subject to random lateral gene
  transfer</p></title>
  <aug>
    <au><snm>Steel</snm><fnm>M</fnm></au>
    <au><snm>Linz</snm><fnm>S</fnm></au>
    <au><snm>Huson</snm><fnm>DH</fnm></au>
    <au><snm>Sanderson</snm><fnm>MJ</fnm></au>
  </aug>
  <source>Journal of theoretical biology</source>
  <publisher>Elsevier</publisher>
  <pubdate>2013</pubdate>
  <volume>322</volume>
  <fpage>81</fpage>
  <lpage>-93</lpage>
</bibl>

<bibl id="B33">
  <title><p>Species Trees from Gene Trees Despite a High Rate of Lateral
  Genetic Transfer: A Tight Bound</p></title>
  <aug>
    <au><snm>Daskalakis</snm><fnm>C</fnm></au>
    <au><snm>Roch</snm><fnm>S</fnm></au>
  </aug>
  <source>arXiv preprint arXiv:1508.01962</source>
  <pubdate>2015</pubdate>
</bibl>

<bibl id="B34">
  <title><p>Phylogenomic species tree estimation in the presence of incomplete
  lineage sorting and horizontal gene transfer</p></title>
  <aug>
    <au><snm>Davidson</snm><fnm>R</fnm></au>
    <au><snm>Vachaspati</snm><fnm>P</fnm></au>
    <au><snm>Mirarab</snm><fnm>S</fnm></au>
    <au><snm>Warnow</snm><fnm>T</fnm></au>
  </aug>
  <source>BMC genomics</source>
  <publisher>BioMed Central Ltd</publisher>
  <pubdate>2015</pubdate>
  <volume>16</volume>
  <issue>Suppl 10</issue>
  <fpage>S1</fpage>
</bibl>

<bibl id="B35">
  <title><p>Which phylogenetic networks are merely trees with additional
  arcs?</p></title>
  <aug>
    <au><snm>Francis</snm><fnm>AR</fnm></au>
    <au><snm>Steel</snm><fnm>M</fnm></au>
  </aug>
  <source>Systematic biology</source>
  <publisher>Oxford University Press</publisher>
  <pubdate>2015</pubdate>
  <volume>64</volume>
  <issue>5</issue>
  <fpage>768</fpage>
  <lpage>-777</lpage>
</bibl>

<bibl id="B36">
  <title><p>The probability of topological concordance of gene trees and
  species trees</p></title>
  <aug>
    <au><snm>Rosenberg</snm><fnm>NA</fnm></au>
  </aug>
  <source>Theoretical population biology</source>
  <publisher>Elsevier</publisher>
  <pubdate>2002</pubdate>
  <volume>61</volume>
  <issue>2</issue>
  <fpage>225</fpage>
  <lpage>-247</lpage>
</bibl>

<bibl id="B37">
  <title><p>Inconsistency of species-tree methods under gene flow</p></title>
  <aug>
    <au><snm>Sol{\'\i}s Lemus</snm><fnm>C</fnm></au>
    <au><snm>Yang</snm><fnm>M</fnm></au>
    <au><snm>An{\'e}</snm><fnm>C</fnm></au>
  </aug>
  <source>Systematic biology</source>
  <publisher>Oxford University Press</publisher>
  <pubdate>2016</pubdate>
  <fpage>syw030</fpage>
</bibl>

<bibl id="B38">
  <title><p>Clustering Genes of Common Evolutionary History</p></title>
  <aug>
    <au><snm>Gori</snm><fnm>K</fnm></au>
    <au><snm>Suchan</snm><fnm>T</fnm></au>
    <au><snm>Alvarez</snm><fnm>N</fnm></au>
    <au><snm>Goldman</snm><fnm>N</fnm></au>
    <au><snm>Dessimoz</snm><fnm>C</fnm></au>
  </aug>
  <pubdate>2016</pubdate>
  <volume>33</volume>
  <issue>6</issue>
  <fpage>1590</fpage>
  <lpage>1605</lpage>
</bibl>

<bibl id="B39">
  <title><p>Gene tree distributions under the coalescent process</p></title>
  <aug>
    <au><snm>Degnan</snm><fnm>J.H.</fnm></au>
    <au><snm>Salter</snm><fnm>L.A.</fnm></au>
  </aug>
  <source>Evolution</source>
  <pubdate>2005</pubdate>
  <volume>59</volume>
  <fpage>24</fpage>
  <lpage>37</lpage>
</bibl>

<bibl id="B40">
  <title><p>The probability of a gene tree topology within a phylogenetic
  network with applications to hybridization detection</p></title>
  <aug>
    <au><snm>Yu</snm><fnm>Y</fnm></au>
    <au><snm>Degnan</snm><fnm>JH</fnm></au>
    <au><snm>Nakhleh</snm><fnm>L</fnm></au>
  </aug>
  <source>PLoS Genet</source>
  <publisher>Public Library of Science</publisher>
  <pubdate>2012</pubdate>
  <volume>8</volume>
  <issue>4</issue>
  <fpage>e1002660</fpage>
</bibl>

<bibl id="B41">
  <title><p>Maximum likelihood inference of reticulate evolutionary
  histories</p></title>
  <aug>
    <au><snm>Yu</snm><fnm>Y</fnm></au>
    <au><snm>Dong</snm><fnm>J</fnm></au>
    <au><snm>Liu</snm><fnm>KJ</fnm></au>
    <au><snm>Nakhleh</snm><fnm>L</fnm></au>
  </aug>
  <source>Proceedings of the National Academy of Sciences</source>
  <publisher>National Acad Sciences</publisher>
  <pubdate>2014</pubdate>
  <volume>111</volume>
  <issue>46</issue>
  <fpage>16448</fpage>
  <lpage>-16453</lpage>
</bibl>

<bibl id="B42">
  <title><p>Discordance of species trees with their most likely gene
  trees</p></title>
  <aug>
    <au><snm>Degnan</snm><fnm>JH</fnm></au>
    <au><snm>Rosenberg</snm><fnm>NA</fnm></au>
  </aug>
  <source>PLoS Genet</source>
  <publisher>Public Library of Science</publisher>
  <pubdate>2006</pubdate>
  <volume>2</volume>
  <issue>5</issue>
  <fpage>e68</fpage>
</bibl>

<bibl id="B43">
  <title><p>Extensive introgression in a malaria vector species complex
  revealed by phylogenomics</p></title>
  <aug>
    <au><snm>Fontaine</snm><fnm>MC</fnm></au>
    <au><snm>Pease</snm><fnm>JB</fnm></au>
    <au><snm>Steele</snm><fnm>A</fnm></au>
    <au><snm>Waterhouse</snm><fnm>RM</fnm></au>
    <au><snm>Neafsey</snm><fnm>DE</fnm></au>
    <au><snm>Sharakhov</snm><fnm>IV</fnm></au>
    <au><snm>Jiang</snm><fnm>X</fnm></au>
    <au><snm>Hall</snm><fnm>AB</fnm></au>
    <au><snm>Catteruccia</snm><fnm>F</fnm></au>
    <au><snm>Kakani</snm><fnm>E</fnm></au>
    <au><cnm>others</cnm></au>
  </aug>
  <source>Science</source>
  <publisher>American Association for the Advancement of Science</publisher>
  <pubdate>2015</pubdate>
  <volume>347</volume>
  <issue>6217</issue>
  <fpage>1258524</fpage>
</bibl>

<bibl id="B44">
  <title><p>{ASTRAL-II}: coalescent-based species tree estimation with many
  hundreds of taxa and thousands of genes</p></title>
  <aug>
    <au><snm>Mirarab</snm><fnm>S</fnm></au>
    <au><snm>Warnow</snm><fnm>T</fnm></au>
  </aug>
  <source>Bioinformatics</source>
  <publisher>Oxford Univ Press</publisher>
  <pubdate>2015</pubdate>
  <volume>31</volume>
  <issue>12</issue>
  <fpage>i44</fpage>
  <lpage>-i52</lpage>
</bibl>

<bibl id="B45">
  <title><p>Dendroscope 3: an interactive tool for rooted phylogenetic trees
  and networks</p></title>
  <aug>
    <au><snm>Huson</snm><fnm>DH</fnm></au>
    <au><snm>Scornavacca</snm><fnm>C</fnm></au>
  </aug>
  <source>Systematic biology</source>
  <publisher>Oxford University Press</publisher>
  <pubdate>2012</pubdate>
  <volume>61</volume>
  <issue>6</issue>
  <fpage>1061</fpage>
  <lpage>-1067</lpage>
</bibl>

<bibl id="B46">
  <title><p>Generating samples under a Wright--Fisher neutral model of genetic
  variation</p></title>
  <aug>
    <au><snm>Hudson</snm><fnm>RR</fnm></au>
  </aug>
  <source>Bioinformatics</source>
  <publisher>Oxford Univ Press</publisher>
  <pubdate>2002</pubdate>
  <volume>18</volume>
  <issue>2</issue>
  <fpage>337</fpage>
  <lpage>-338</lpage>
</bibl>

<bibl id="B47">
  <title><p>Comparison of phylogenetic trees</p></title>
  <aug>
    <au><snm>Robinson</snm><fnm>D.R.</fnm></au>
    <au><snm>Foulds</snm><fnm>L.R.</fnm></au>
  </aug>
  <source>Math. Biosci.</source>
  <pubdate>1981</pubdate>
  <volume>53</volume>
  <fpage>131</fpage>
  <lpage>-147</lpage>
</bibl>

<bibl id="B48">
  <title><p>MDSJ: Java Library for Multidimensional Scaling (Version
  0.2).</p></title>
  <aug>
    <au><snm>Group.</snm><fnm>A</fnm></au>
  </aug>
  <source>Available at
  \url{http://www.inf.uni-konstanz.de/algo/software/mdsj/}.</source>
  <note>Algorithmics Group, University of Konstanz, 2009.</note>
</bibl>

<bibl id="B49">
  <title><p>Some methods for classification and analysis of multivariate
  observations</p></title>
  <aug>
    <au><snm>MacQueen</snm><fnm>J</fnm></au>
    <au><cnm>others</cnm></au>
  </aug>
  <source>Proceedings of the fifth Berkeley symposium on mathematical
  statistics and probability</source>
  <pubdate>1967</pubdate>
  <volume>1</volume>
  <issue>14</issue>
  <fpage>281</fpage>
  <lpage>-297</lpage>
</bibl>

<bibl id="B50">
  <title><p>Silhouettes: a graphical aid to the interpretation and validation
  of cluster analysis</p></title>
  <aug>
    <au><snm>Rousseeuw</snm><fnm>PJ</fnm></au>
  </aug>
  <source>Journal of computational and applied mathematics</source>
  <publisher>Elsevier</publisher>
  <pubdate>1987</pubdate>
  <volume>20</volume>
  <fpage>53</fpage>
  <lpage>-65</lpage>
</bibl>

<bibl id="B51">
  <title><p>Species tree inference by minimizing deep coalescences</p></title>
  <aug>
    <au><snm>Than</snm><fnm>C.</fnm></au>
    <au><snm>Nakhleh</snm><fnm>L.</fnm></au>
  </aug>
  <source>PLoS Computational Biology</source>
  <pubdate>2009</pubdate>
  <volume>5</volume>
  <issue>9</issue>
  <fpage>e1000501</fpage>
</bibl>

<bibl id="B52">
  <title><p>{PhyloNet}: a software package for analyzing and reconstructing
  reticulate evolutionary relationships</p></title>
  <aug>
    <au><snm>Than</snm><fnm>C.</fnm></au>
    <au><snm>Ruths</snm><fnm>D.</fnm></au>
    <au><snm>Nakhleh</snm><fnm>L.</fnm></au>
  </aug>
  <source>BMC Bioinformatics</source>
  <pubdate>2008</pubdate>
  <volume>9</volume>
  <fpage>322</fpage>
</bibl>

<bibl id="B53">
  <title><p>Towards the development of computational tools for evaluating
  phylogenetic network reconstruction methods</p></title>
  <aug>
    <au><snm>Warnow</snm><fnm>LNJST</fnm></au>
    <au><snm>Linder</snm><fnm>CR</fnm></au>
    <au><snm>Tholse</snm><fnm>BMMA</fnm></au>
  </aug>
  <source>Proc. Eighth Pacific Symp. Biocomputing (PSB'03)</source>
  <publisher>Singapore: World Scientific Publishing</publisher>
  <pubdate>2003</pubdate>
  <fpage>315</fpage>
  <lpage>-326</lpage>
</bibl>

<bibl id="B54">
  <title><p>Reticulate evolutionary history and extensive introgression in
  mosquito species revealed by phylogenetic network analysis</p></title>
  <aug>
    <au><snm>Wen</snm><fnm>D.</fnm></au>
    <au><snm>Yu</snm><fnm>Y.</fnm></au>
    <au><snm>Hahn</snm><fnm>M.W.</fnm></au>
    <au><snm>Nakhleh</snm><fnm>L.</fnm></au>
  </aug>
  <source>Molecular Ecology</source>
  <pubdate>2016</pubdate>
  <volume>25</volume>
  <fpage>2361</fpage>
  <lpage>2372</lpage>
</bibl>

<bibl id="B55">
  <title><p>Conundrum of jumbled mosquito genomes</p></title>
  <aug>
    <au><snm>Clark</snm><fnm>AG</fnm></au>
    <au><snm>Messer</snm><fnm>PW</fnm></au>
    <au><cnm>others</cnm></au>
  </aug>
  <source>Science</source>
  <publisher>American Association for The Advancement of Science, Washington,
  USA</publisher>
  <pubdate>2015</pubdate>
  <volume>347</volume>
  <issue>6217</issue>
  <fpage>27</fpage>
  <lpage>-28</lpage>
</bibl>

<bibl id="B56">
  <title><p>Bayesian Inference of Reticulate Phylogenies under the Multispecies
  Network Coalescent</p></title>
  <aug>
    <au><snm>Wen</snm><fnm>D.</fnm></au>
    <au><snm>Yu</snm><fnm>Y.</fnm></au>
    <au><snm>Nakhleh</snm><fnm>L.</fnm></au>
  </aug>
  <source>PLoS Genetics</source>
  <pubdate>2016</pubdate>
  <volume>12</volume>
  <issue>5</issue>
  <fpage>e1006006</fpage>
</bibl>

</refgrp>
} % end of \BMCxmlcomment

\end{backmatter}

\newpage
\section*{APPENDIX}

 The pseudo-code in Algorithm \ref{alg:Net2Tre} below is that of the algorithm for computing the MUL-tree 
 $T$ of a phylogenetic network $\psi$. 

 \begin{algorithm}[H]
 {\small
 \KwIn{Phylogenetic $\X$-network $\psi$ and its branch lengths {\boldmath $\lambda$}.}
 \KwOut{MUL-tree $T$ and its branch lengths {\boldmath $\lambda'$}.}
 $T \leftarrow \psi$\;  
 {\boldmath $\lambda'$}$\leftarrow${\boldmath $\lambda$}\;
  \While{traversing the nodes of $T$ bottom-up}{ 
 \If{node $h$ has two parents, $u$ and $v$, and child $w$}{
   Create a copy of $T_w$ whose root is new node $w'$\;
   Add to $T$ two new edges $e_1=(u,w)$ and $e_2=(v,w')$\;
   ${\lambda'}_{(u,w)} \leftarrow {\lambda}_{(u,h)}+{\lambda}_{(h,w)}$; ${\lambda'}_{(v,w)} \leftarrow  {\lambda}_{(v,h)}+{\lambda}_{(h,w)}$\;
   Delete from $T$ node $h$ and edges $(u,h)$, $(v,h)$, and $(h,w)$\;
   Delete $\lambda'_{(u,h)}$, $\lambda'_{(v,h)}$, $\lambda'_{(h,w)}$\; 
 }
 }
 \Return{$T$}\;
 \caption{{\bf NetworkToMULTree.} \label{alg:Net2Tre}}
 }
 \end{algorithm}

\begin{proof}[Proof of Lemma \ref{lemm1}]
 Let $\psi$ be a phylogenetic networks on $3$ taxa, and consider the set ${\cal W}(\psi)$ when restricted only to the distinct topologies. 
 We have $1 \leq |{\cal W}(\psi)| \leq 3$.  

If $|{\cal W}(\psi)|=3$, then 
the topology of every gene tree on the same set of $3$ taxa is an element of ${\cal W}(\psi)$. Therefore, no gene 
tree can satisfy Eq. \eqref{anom}. 

If $|{\cal W}(\psi)|=2$, without loss of generality, let the two parental trees be $((A,B),C)$ and $(A,(B,C))$. If $\psi$ produces an 
anomaly, then it must be that the anomalous gene tree is $((a,c),b)$. To obtain this gene tree, $a$ and $c$ must coalesce 
above the root in both parental trees. Since for the other two gene trees the coalescence events could occur under or above 
the root, the probability of each of them is at least the probability of $((a,c),b)$. Therefore, $((a,c),b)$ is not anomalous. 

If $|{\cal W}(\psi)| = 1$, without loss of generality, let the parental tree topology be $((A,B),C)$. If $\psi$ produces an anomaly, 
then it must be that the anomalous gene tree is either $((a,c),b)$ or $(a,(b,c))$. To obtain $((a,c),b)$, $a$ and $c$ must coalesce 
above the root in the parental tree. And to obtain $(a,(b,c))$, $b$ and $c$ must also coalesce above the root in the parental tree. Since for $((a,b),c)$ the coalescence events could occur under or above the root, its probability is at least the probability of $((a,c),b)$ (or $(a,(b,c))$ ). Therefore neither $((a,c),b)$ nor $(a,(b,c))$ is anomalous.
\end{proof}

% or include bibliography directly:
% \begin{thebibliography}
% \bibitem{b1}
% \end{thebibliography}

%%%%%%%%%%%%%%%%%%%%%%%%%%%%%%%%%%%
%%                               %%
%% Figures                       %%
%%                               %%
%% NB: this is for captions and  %%
%% Titles. All graphics must be  %%
%% submitted separately and NOT  %%
%% included in the Tex document  %%
%%                               %%
%%%%%%%%%%%%%%%%%%%%%%%%%%%%%%%%%%%

%%
%% Do not use \listoffigures as most will included as separate files
%
%\section*{Figures}
%  \begin{figure}[h!]
%  \caption{\csentence{Sample figure title.}
%      A short description of the figure content
%      should go here.}
%      \end{figure}
%
%\begin{figure}[h!]
%  \caption{\csentence{Sample figure title.}
%      Figure legend text.}
%      \end{figure}

%%%%%%%%%%%%%%%%%%%%%%%%%%%%%%%%%%%
%%                               %%
%% Tables                        %%
%%                               %%
%%%%%%%%%%%%%%%%%%%%%%%%%%%%%%%%%%%

% Use of \listoftables is discouraged.
%

%%%%%%%%%%%%%%%%%%%%%%%%%%%%%%%%%%%
%%                               %%
%% Additional Files              %%
%%                               %%
%%%%%%%%%%%%%%%%%%%%%%%%%%%%%%%%%%%
%
%\section*{Additional Files}
%  \subsection*{Additional file 1 --- Sample additional file title}
%    Additional file descriptions text (including details of how to
%    view the file, if it is in a non-standard format or the file extension).  This might
%    refer to a multi-page table or a figure.
%
%  \subsection*{Additional file 2 --- Sample additional file title}
%    Additional file descriptions text.

\begin{table}[!ht]
\centering
\caption{Probabilities of 15 rooted gene trees given the phylogenetic network $\psi$ of Fig. \ref{fig:dc}(b) ($w=0$). The quantity $g_{ij}(t)$ is the probability that $i$ 
lineages coalesce into $j$ lineages within time $t$ \cite{rosenberg2002probability}.}
\label{tab:table1}
\begin{tabular}{|c|c|} 
\hline
Gene Tree $T_i$       & $P(T_i|\psi,x,y,\gamma)$ \\ \hline\hline
$T_1=(((b,c),a),d)$                & \tabincell{c}{$g_{21}(y)[\gamma(g_{21}(x)+g_{22}(x)\frac{1}{3})+(1-\gamma)(g_{22}(x)\frac{1}{3})]$ \\ $+g_{22}(y)[\gamma^2(g_{31}(x)\frac{1}{3}+g_{32}(x)\frac{1}{3}\frac{1}{3}+g_{33}(x)\frac{1}{6}\frac{1}{3})$ \\ $+(1-\gamma)^2(g_{32}(x)\frac{1}{3}\frac{1}{3}+g_{33}(x)\frac{1}{6}\frac{1}{3})$ \\ $+2\gamma(1-\gamma)(g_{22}(x)g_{22}(x)\frac{1}{6}\frac{1}{3})]$}     \\ \hline

$T_2=(((b,c),d),a)$                & \tabincell{c}{$g_{21}(y)[(1-\gamma)(g_{21}(x)+g_{22}(x)\frac{1}{3})+\gamma(g_{22}(x)\frac{1}{3})]$ \\ $+g_{22}(y)[(1-\gamma)^2(g_{31}(x)\frac{1}{3}+g_{32}(x)\frac{1}{3}\frac{1}{3}+g_{33}(x)\frac{1}{6}\frac{1}{3})$ \\ $+\gamma^2(g_{32}(x)\frac{1}{3}\frac{1}{3}+g_{33}(x)\frac{1}{6}\frac{1}{3})$ \\ $+2\gamma(1-\gamma)(g_{22}(x)g_{22}(x)\frac{1}{6}\frac{1}{3})]$}     \\ \hline

$T_3=((a,b),(c,d))$                & \tabincell{c}{$g_{22}(y)[(\gamma^2+(1-\gamma)^2)(g_{32}(x)\frac{1}{3}\frac{1}{3}+g_{33}(x)\frac{2}{6}\frac{1}{3})$ \\ $+\gamma(1-\gamma)(g_{21}(x)g_{21}(x)+g_{21}(x)g_{22}(x)\frac{1}{3}+g_{22}(x)g_{21}(x)\frac{1}{3}+g_{22}(x)g_{22}(x)\frac{2}{6}\frac{1}{3})$ \\ $+\gamma(1-\gamma)(g_{22}(x)g_{22}(x)\frac{2}{6}\frac{1}{3})]$ }      \\ \hline

$T_4=((a,c),(b,d))$                & \tabincell{c}{$g_{22}(y)[(\gamma^2+(1-\gamma)^2)(g_{32}(x)\frac{1}{3}\frac{1}{3}+g_{33}(x)\frac{2}{6}\frac{1}{3})$ \\ $+\gamma(1-\gamma)(g_{21}(x)g_{21}(x)+g_{21}(x)g_{22}(x)\frac{1}{3}+g_{22}(x)g_{21}(x)\frac{1}{3}+g_{22}(x)g_{22}(x)\frac{2}{6}\frac{1}{3})$ \\ $+\gamma(1-\gamma)(g_{22}(x)g_{22}(x)\frac{2}{6}\frac{1}{3})]$ }     \\ \hline

$T_5=(((a,b),c),d)$                & \tabincell{c}{$g_{22}(y)[\gamma^2(g_{31}(x)\frac{1}{3}+g_{32}(x)\frac{1}{3}\frac{1}{3}+g_{33}(x)\frac{1}{6}\frac{1}{3})+(1-\gamma)^2(g_{33}(x)\frac{1}{6}\frac{1}{3})$ \\ $+\gamma(1-\gamma)(g_{21}(x)g_{22}(x)\frac{1}{3}+g_{22}(x)g_{22}(x)\frac{1}{6}\frac{1}{3})+\gamma(1-\gamma)g_{22}(x)g_{22}(x)\frac{1}{6}\frac{1}{3}]$}     \\ \hline

$T_6=(((a,c),b),d)$                & \tabincell{c}{$g_{22}(y)[\gamma^2(g_{31}(x)\frac{1}{3}+g_{32}(x)\frac{1}{3}\frac{1}{3}+g_{33}(x)\frac{1}{6}\frac{1}{3})+(1-\gamma)^2(g_{33}(x)\frac{1}{6}\frac{1}{3})$ \\ $+\gamma(1-\gamma)(g_{21}(x)g_{22}(x)\frac{1}{3}+g_{22}(x)g_{22}(x)\frac{1}{6}\frac{1}{3})+\gamma(1-\gamma)g_{22}(x)g_{22}(x)\frac{1}{6}\frac{1}{3}]$}     \\ \hline

$T_7=(a,(b,(c,d)))$                & \tabincell{c}{$g_{22}(y)[(1-\gamma)^2(g_{31}(x)\frac{1}{3}+g_{32}(x)\frac{1}{3}\frac{1}{3}+g_{33}(x)\frac{1}{6}\frac{1}{3})+\gamma^2(g_{33}(x)\frac{1}{6}\frac{1}{3})$ \\ $+\gamma(1-\gamma)(g_{21}(x)g_{22}(x)\frac{1}{3}+g_{22}(x)g_{22}(x)\frac{1}{6}\frac{1}{3})+\gamma(1-\gamma)g_{22}(x)g_{22}(x)\frac{1}{6}\frac{1}{3}]$}     \\ \hline

$T_8=(((b,d),c),a)$                & \tabincell{c}{$g_{22}(y)[(1-\gamma)^2(g_{31}(x)\frac{1}{3}+g_{32}(x)\frac{1}{3}\frac{1}{3}+g_{33}(x)\frac{1}{6}\frac{1}{3})+\gamma^2(g_{33}(x)\frac{1}{6}\frac{1}{3})$ \\ $+\gamma(1-\gamma)(g_{21}(x)g_{22}(x)\frac{1}{3}+g_{22}(x)g_{22}(x)\frac{1}{6}\frac{1}{3})+\gamma(1-\gamma)g_{22}(x)g_{22}(x)\frac{1}{6}\frac{1}{3}]$}     \\ \hline

$T_9=((a,d),(b,c))$                & \tabincell{c}{$g_{21}(y)[\gamma g_{22}(x)\frac{1}{3}+(1-\gamma)g_{22}(x)\frac{1}{3}]$ \\ $+g_{22}(y)[\gamma^2(g_{32}(x)\frac{1}{3}\frac{1}{3}+g_{33}(x)\frac{2}{6}\frac{1}{3})+(1-\gamma)^2(g_{32}(x)\frac{1}{3}\frac{1}{3}+g_{33}(x)\frac{2}{6}\frac{1}{3})$ \\ $+\gamma(1-\gamma)(g_{22}(x)g_{22}(x)\frac{2}{6}\frac{1}{3})+\gamma(1-\gamma)(g_{22}(x)g_{22}(x)\frac{2}{6}\frac{1}{3})]$}    \\ \hline

$T_{10}=(((a,b),d),c)$             & \tabincell{c}{$g_{22}(y)[\gamma^2(g_{32}(x)\frac{1}{3}\frac{1}{3}+g_{33}(x)\frac{1}{6}\frac{1}{3})+(1-\gamma)^2(g_{33}(x)\frac{1}{6}\frac{1}{3})$ \\ $+\gamma(1-\gamma)(g_{21}(x)g_{22}(x)\frac{1}{3}+g_{22}(x)g_{22}(x)\frac{1}{6}\frac{1}{3})+\gamma(1-\gamma)(g_{22}(x)g_{22}(x)\frac{1}{6}\frac{1}{3})]$}     \\ \hline

$T_{11}=(b,(a,(c,d)))$             & \tabincell{c}{$g_{22}(y)[(1-\gamma)^2(g_{32}(x)\frac{1}{3}\frac{1}{3}+g_{33}(x)\frac{1}{6}\frac{1}{3})+\gamma^2(g_{33}(x)\frac{1}{6}\frac{1}{3})$ \\ $+\gamma(1-\gamma)(g_{21}(x)g_{22}(x)\frac{1}{3}+g_{22}(x)g_{22}(x)\frac{1}{6}\frac{1}{3})+\gamma(1-\gamma)(g_{22}(x)g_{22}(x)\frac{1}{6}\frac{1}{3})]$}     \\ \hline

$T_{12}=(((a,d),b),c)$             & \tabincell{c}{$g_{22}(y)[\gamma^2(g_{33}(x)\frac{1}{6}\frac{1}{3})+(1-\gamma)^2(g_{33}(x)\frac{1}{6}\frac{1}{3})$ \\ $+\gamma(1-\gamma)(g_{22}(x)g_{22}(x)\frac{1}{6}\frac{1}{3})+\gamma(1-\gamma)(g_{22}(x)g_{22}(x)\frac{1}{6}\frac{1}{3})]$}     \\ \hline

$T_{13}=(((b,d),a),c)$             & \tabincell{c}{$g_{22}(y)[\gamma^2(g_{33}(x)\frac{1}{6}\frac{1}{3})+(1-\gamma)^2(g_{32}(x)\frac{1}{3}\frac{1}{3}+g_{33}(x)\frac{1}{6}\frac{1}{3})$ \\ $+\gamma(1-\gamma)(g_{22}(x)g_{22}(x)\frac{1}{6}\frac{1}{3})+\gamma(1-\gamma)(g_{21}(x)g_{22}(x)\frac{1}{3}+g_{22}(x)g_{22}(x)\frac{1}{6}\frac{1}{3})]$}     \\ \hline

$T_{14}=(((a,c),d),b)$             & \tabincell{c}{$g_{22}(y)[(1-\gamma)^2(g_{33}(x)\frac{1}{6}\frac{1}{3})+\gamma^2(g_{32}(x)\frac{1}{3}\frac{1}{3}+g_{33}(x)\frac{1}{6}\frac{1}{3})$ \\ $+\gamma(1-\gamma)(g_{22}(x)g_{22}(x)\frac{1}{6}\frac{1}{3})+\gamma(1-\gamma)(g_{21}(x)g_{22}(x)\frac{1}{3}+g_{22}(x)g_{22}(x)\frac{1}{6}\frac{1}{3})]$}     \\ \hline

$T_{15}=(((a,d),c),b)$             & \tabincell{c}{$g_{22}(y)[\gamma^2(g_{33}(x)\frac{1}{6}\frac{1}{3})+(1-\gamma)^2(g_{33}(x)\frac{1}{6}\frac{1}{3})$ \\ $+\gamma(1-\gamma)(g_{22}(x)g_{22}(x)\frac{1}{6}\frac{1}{3})+\gamma(1-\gamma)(g_{22}(x)g_{22}(x)\frac{1}{6}\frac{1}{3})]$}     \\ \hline

\end{tabular}
\end{table}

\begin{table}[!ht]
\centering
\caption{Probabilities of 15 rooted gene trees given the phylogenetic network $\psi$ of Fig. \ref{fig:dc}(b) ($w=0$) as $x\to\infty$.}
\label{tab:table1x}
\begin{tabular}{|c|c|} 
\hline
Gene Tree $T_i$       & $P(T_i|\psi,y,\gamma)$ \\ \hline\hline
$T_1=(((b,c),a),d)$                & $\gamma-(\gamma-\frac{\gamma^2}{3})e^{-y}$     \\ \hline
$T_2=(((b,c),d),a)$                & $(1-\gamma)-(-\frac{\gamma^2}{3}-\frac{\gamma}{3}+\frac{2}{3})e^{-y}$     \\ \hline
$T_3=((a,b),(c,d))$                & $\gamma(1-\gamma)e^{-y}$                  \\ \hline
$T_4=((a,c),(b,d))$                & $\gamma(1-\gamma)e^{-y}$                  \\ \hline
$T_5=(((a,b),c),d)$                & $\frac{\gamma^2}{3}e^{-y}$     \\ \hline
$T_6=(((a,c),b),d)$                & $\frac{\gamma^2}{3}e^{-y}$     \\ \hline
$T_7=(a,(b,(c,d)))$                & $\frac{(1-\gamma)^2}{3}e^{-y}$     \\ \hline
$T_8=(((b,d),c),a)$                & $\frac{(1-\gamma)^2}{3}e^{-y}$     \\ \hline
$T_9=((a,d),(b,c))$                & 0    \\ \hline
$T_{10}=(((a,b),d),c)$             & 0     \\ \hline
$T_{11}=(b,(a,(c,d)))$             & 0     \\ \hline
$T_{12}=(((a,d),b),c)$             & 0     \\ \hline
$T_{13}=(((b,d),a),c)$             & 0     \\ \hline
$T_{14}=(((a,c),d),b)$             & 0     \\ \hline
$T_{15}=(((a,d),c),b)$             & 0     \\ \hline

\end{tabular}
\end{table}

 \begin{figure}[!ht]
  \centering
    \subfigure[$\gamma=0.5$]{
  	  \label{fig:4taxa_chart:a}
      \includegraphics[width=0.45\textwidth]{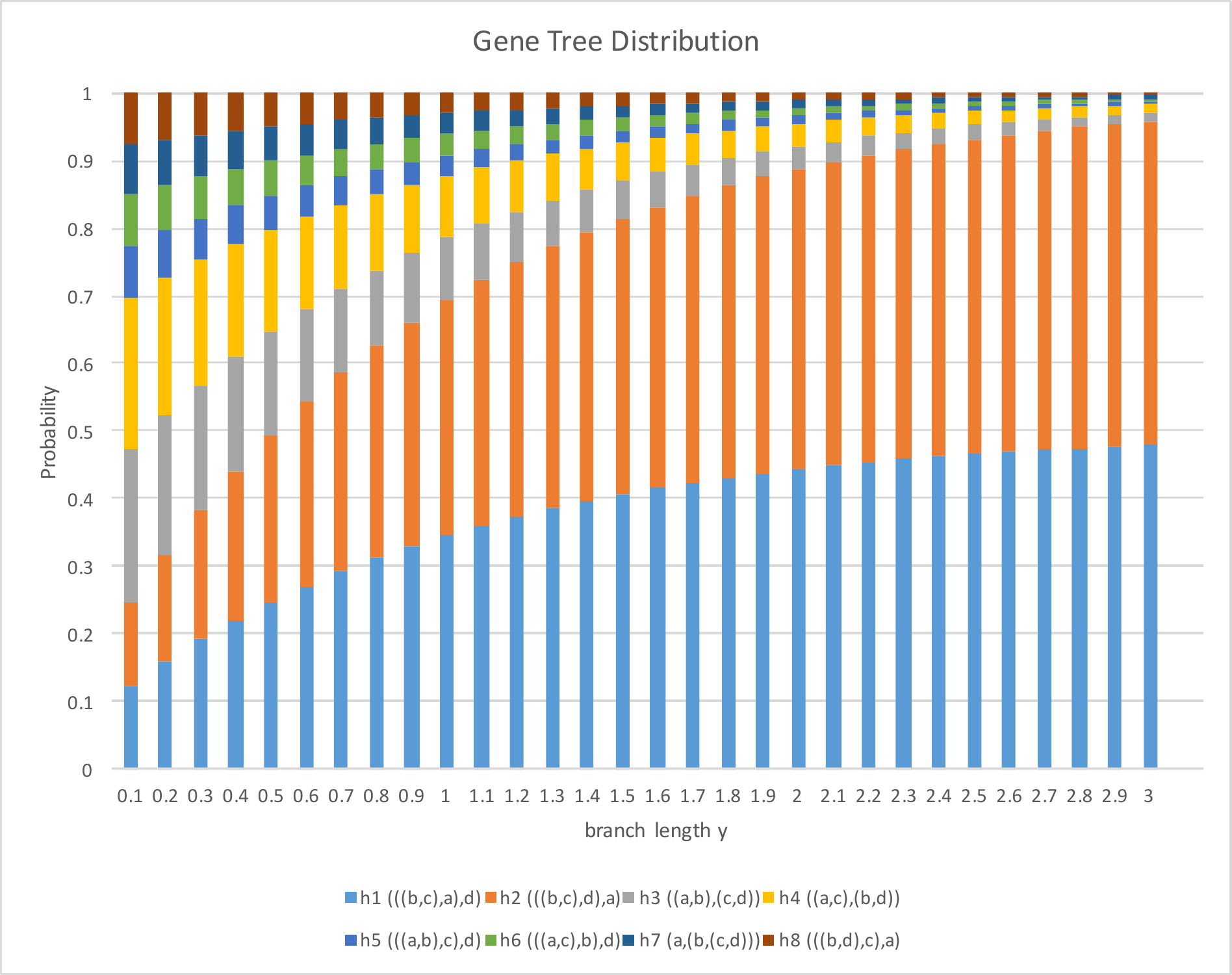}}
    \subfigure[$\gamma=0.05$]{
  	  \label{fig:4taxa_chart:b}
      \includegraphics[width=0.45\textwidth]{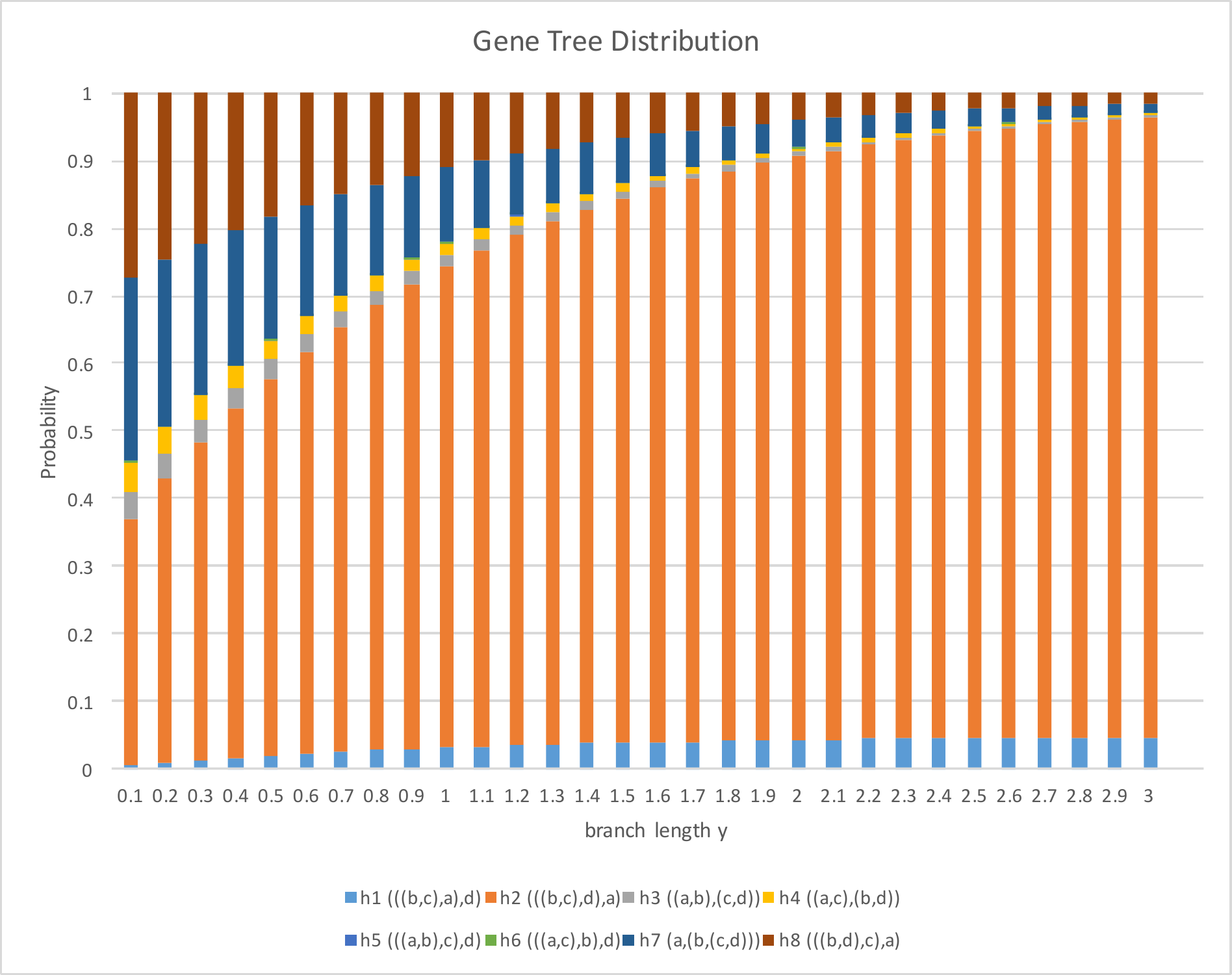}}
  \caption{Gene tree distribution for the phylogenetic network in Figure \ref{fig:dc}(b) ($w=0$) as $x\to\infty$. The x-axis corresponds to branch length $y$ and the y-axis is the probability of each gene tree topology (see Table \ref{tab:table1x} in the Appendix).}
  \label{fig:4taxa_chart}
\end{figure}

\end{document}